\newcommand{\shefour}[4]{
\resizebox{!}{.5cm}{
\ensuremath{
\begin{array}{c|c}
1 & #1 \\
\hline
2 & #2 \\
\hline
3 & #3 \\
\hline
4 & #4 
\end{array}
}
}
}
\newcommand{\Ret}[0]{\ensuremath{\textsc{Ret}}}
\renewcommand{\phi}{\varphi}
\newcommand{\ignore}[1]{}
\title{QCSP on partially reflexive cycles - the wavy line of tractability}
\author{Florent Madelaine\inst{1} \and Barnaby Martin\inst{2}\thanks{The author is supported by ANR Blanc International ALCOCLAN.}}
\institute{Clermont Universit{\'e}, Universit{\'e} d'Auvergne, LIMOS, BP 10448, F-63000 Clermont-Ferrand, France.\and CNRS / LIX UMR 7161, \'Ecole Polytechnique, Palaiseau, France.}
\begin{document}
\maketitle

\begin{abstract}
We study the (non-uniform) quantified constraint satisfaction problem QCSP$(\mathcal{H})$ as $\mathcal{H}$ ranges over partially reflexive cycles. We obtain a complexity-theoretic dichotomy: QCSP$(\mathcal{H})$ is either in NL or is NP-hard. The separating conditions are somewhat esoteric hence the epithet ``wavy line of tractability'' (see Figure~\ref{fig:wavyline} at end). 
\end{abstract}

\section{Introduction}

The \emph{quantified constraint satisfaction problem} QCSP$(\mathcal{B})$, for a fixed \emph{template} (structure) $\mathcal{B}$, is a popular generalisation of the \emph{constraint satisfaction problem} CSP$(\mathcal{B})$. In the latter, one asks if a primitive positive sentence (the existential quantification of a conjunction of atoms) $\Phi$ is true on $\mathcal{B}$, while in the former this sentence may be positive Horn (where universal quantification is also permitted). Much of the theoretical research into CSPs is in respect of a large complexity classification project -- it is conjectured that CSP$(\mathcal{B})$ is always either in P or NP-complete \cite{FederVardi}. This \emph{dichotomy} conjecture remains unsettled, although dichotomy is now known on substantial classes (e.g. structures of size $\leq 3$ \cite{Schaefer,Bulatov} and smooth digraphs \cite{HellNesetril,barto:1782}). Various methods, combinatorial (graph-theoretic), logical and universal-algebraic have been brought to bear on this classification project, with many remarkable consequences. A conjectured delineation for the dichotomy was given in the algebraic language in \cite{JBK}.

Complexity classifications for QCSPs appear to be harder than for CSPs. Indeed, a classification for QCSPs will give a fortiori a classification for CSPs (if $\mathcal{B} \uplus \mathcal{K}_1$ is the disjoint union of $\mathcal{B}$ with an isolated element, then QCSP$(\mathcal{B} \uplus \mathcal{K}_1)$ and CSP$(\mathcal{B})$ are polynomially equivalent). Just as CSP$(\mathcal{B})$ is always in NP, so QCSP$(\mathcal{B})$ is always in Pspace. However, no overarching polychotomy has been conjectured for the complexities of QCSP$(\mathcal{B})$, as $\mathcal{B}$ ranges over finite structures, but the only known complexities are P, NP-complete and Pspace-complete (see \cite{BBCJK,CiE2006} for some trichotomies). It seems plausible that these complexities are the only ones that can be so obtained (for more in this see \cite{Meditations}). 

In this paper we study the complexity of QCSP$(\mathcal{H})$, where $\mathcal{H}$ is a partially reflexive cycle. In this respect, our paper is a companion to the similar classification for partially reflexive forests in \cite{QCSPforests}. We derive a classification between those cases that are in NL and those that are NP-hard. For some of the NP-hard cases we are able to demonstrate Pspace-completeness. The dichotomy, as depicted in Figure~\ref{fig:wavyline} at the end, is quite esoteric and deviates somewhat from similar classifications (\mbox{e.g.} for retraction as given in \cite{pseudoforests}). To our minds, this makes it interesting in its own right. Some of our hardness proofs come from judicious amendments to the techniques used in \cite{QCSPforests}. Several others use different elaborate encodings of retraction problems, known to be hard from \cite{pseudoforests}. All but one of our NL-membership results follow from a majority polymorphism in an equivalent template (indeed -- the so-called \emph{Q-core} of \cite{CP2012}), as they did in \cite{pseudoforests}. However, $\mathcal{C}_{0111}$ is special. It has no QCSP-equivalent that admits a majority (indeed, it omits majority and is a Q-core), so we have to give a specialised algorithm, based on ideas from \cite{LICS2011}, to demonstrate that QCSP$(\mathcal{C}_{0111})$ is in L. Indeed, and in light of the observations in \cite{CP2012}, this is the principal news from the partially reflexive cycles classification that removes it from being simply a sequel to partially reflexive forests: for a partially reflexive forest $\mathcal{H}$, either the Q-core of $\mathcal{H}$ admits a majority polymorphism and QCSP$(\mathcal{H})$ is in NL, or QCSP$(\mathcal{C})$ is NP-hard. The same can not be said for partially reflexive cycles, due to the odd case of $\mathcal{C}_{0111}$. 

This paper is organised as follows. After the preliminaries, we address small cycles in Section~\ref{sec:small}. Then we deal with reflexive cycles, cycles whose loops induce a path and cycles with disconnected loops in Sections~\ref{sec:reflexive}, \ref{sec:path} and  \ref{sec:disconnected}, respectively. Finally we give our classification in Section~\ref{sec:class} and our conclusions in Section~\ref{sec:conclusion}. For reasons of space, many proofs are  deferred to the appendix.

\section{Definitions and preliminaries}

Let $[n]:=\{1,\ldots,n\}$. A graph $\mathcal{G}$ has vertex set $G$, of cardinality $|G|$, and edge set $E(\mathcal{G})$.
For a sequence $\alpha \in \{0,1\}^*$, of length $|\alpha|$, let $\mathcal{P}_\alpha$ be the undirected path on $|\alpha|$ vertices such that the $i$th vertex has a loop iff the $i$th entry of $\alpha$ is $1$ (we may say that the path $\mathcal{P}$ is \emph{of the form} $\alpha$). We will usually envisage the domain of a path with $n$ vertices to be $[n]$, where the vertices appear in the natural order. Similarly, for $\alpha \in \{0,1\}^*$, let $\mathcal{C}_{\alpha}$ be the $|\alpha|$ cycle with domain $[n]$ and edge set $\{ (i,j) : |j-i|=1 \bmod n\} \cup \{ (i,i) : \alpha[i]=1 \}$ (note $|n-1|=|1-n|=1 \bmod n$). If $\alpha$ and $\alpha'$ are sequences in $\{0,1\}^n$ such that $\alpha[i]=\alpha'[i+1 \bmod n]$ then $\mathcal{C}_{\alpha}$ and $\mathcal{C}_{\alpha'}$ are isomorphic.

A partially reflexive cycle is one that may include some self-loops. For such an $m$-cycle $\mathcal{C}$, whose vertices we will imagine to be $v_1,\ldots,v_m$ in their natural $\bmod\ m$ adjacencies, let $[v_i \Rightarrow v_j]$ be shorthand for a conjunction specifying a path, whichever is the fastest way $\bmod\ m$, from $v_i$ to $v_j$. For example, if $m=5$, then 1.) $[v_1 \Rightarrow v_3]$ is $E(v_1,v_2) \wedge E(v_2,v_3)$, 2.) $[v_3 \Rightarrow v_1]$ is $E(v_3,v_2) \wedge E(v_2,v_1)$, and 3.) $[v_4 \Rightarrow v_1]$ is $E(v_4,v_5) \wedge E(v_5,v_1)$. We ask the reader to endure the following relaxation of this notation; $[v_i,v_{i+1} \Rightarrow v_j]$ indicates an edge from $v_i$ to $v_{i+1}$ then a path to $v_j$ (which may not be the same as $[v_i \Rightarrow v_j]$ as the latter may take the other path around the cycle). Finally, let $\mathrm{Ref}(v_i,\ldots,v_j)$ indicate $E(v_i,v_i) \wedge \ldots \wedge E(v_j,v_j)$, whichever is the quickest way around the cycle $\bmod\ m$. All graphs in this paper are undirected, so edge statements of the form $E(x,y)$ should be read as asserting $E(x,y) \wedge E(y,x)$.

The problems CSP$(\mathcal{T})$ and QCSP$(\mathcal{T})$ each take as input a sentence $\Phi$, and ask whether this sentence is true on $\mathcal{T}$. For the former, the sentence involves the existential quantification of a conjunction of atoms -- \emph{primitive positive} logic. For the latter, the sentence involves the arbitrary quantification of a conjunction of atoms -- \emph{positive Horn} logic.  By convention equalities are permitted in both of these, but these may be propagated out by substitution in all but trivial degenerate cases. 
The \emph{retraction problem} $\Ret(\mathcal{B})$ takes as input some $\mathcal{G}$, with $\mathcal{H}$ an induced substructure of $\mathcal{G}$, and asks whether there is a homomorphism $h:\mathcal{G} \rightarrow \mathcal{H}$ such that $h$ is the identity on $\mathcal{H}$. It is important that the copy of $\mathcal{H}$ is specified in $\mathcal{G}$; it can be that $\mathcal{H}$ appears twice as an induced substructure and there is a retraction from one of these instances but not to the other. The problem Ret$(\mathcal{H})$ is easily seen to be logspace equivalent with the problem CSP$(\mathcal{H}^\mathrm{c})$, where $\mathcal{H}^\mathrm{c}$ is $\mathcal{H}$ expanded with all constants (one identifies all elements assigned to the same constant and enforces the structure $\mathcal{H}$ on those constants).

The \emph{direct product} $\mathcal{G} \times \mathcal{H}$ of two graphs $\mathcal{G}$ and $\mathcal{H}$ has vertex set $\{(x,y):x \in G, y \in H\}$ and edge set $\{((x,u),(y,v)):x,y \in G, u,v \in H, (x,y) \in E(\mathcal{G}), (u,v) \in E(\mathcal{H})\}$. Direct products are (up to isomorphism) associative and commutative. The $k$th power $\mathcal{G}^k$ of a graph $\mathcal{G}$ is $\mathcal{G} \times \ldots \times \mathcal{G}$ ($k$ times). A homomorphism from a graph $\mathcal{G}$ to a graph $\mathcal{H}$ is a function $h:G\rightarrow H$ such that, if $(x,y) \in E(\mathcal{G})$, then $(h(x),h(y)) \in E(\mathcal{G})$. A \emph{$k$-ary polymorphism} of a graph is a homomorphism from $\mathcal{G}^k$ to $\mathcal{G}$. A ternary function $f:G^3 \rightarrow G$ is designated a \emph{majority} operation if $f(x,x,y)=f(x,y,x)=f(y,x,x)=x$, for all $x,y \in G$.

A positive Horn sentence $\Phi$ in the language of graphs induces naturally a graph $\mathcal{G}_\Phi$ whose vertices are the variables of $\Phi$ and whose edges are the atoms of $\Phi$. In the case of primitive positive $\Phi$ one would call $\mathcal{G}_\Phi$ the \emph{canonical database} and $\Phi$ its \emph{canonical query}. With positive Horn $\Phi$ there is additional extra structure and one may talk of a vertex-variable as being existential/ universal and as coming before/ after (earlier/ later), in line with the quantifier \textbf{block} and its order in $\Phi$. Variables in the same quantifier block will not need their orders considered (there is commutativity within a block anyway). A typical reduction from a retraction problem Ret$(\mathcal{C})$, where $|C|=m$, builds a positive Horn $\Phi$ sentence involving variables $v_1,\ldots,v_m$ where we want $\mathcal{G}_\Phi$ restricted to $\{v_1,\ldots,v_m\}$ (itself a copy of $\mathcal{C}$) to map automorphically to $\mathcal{C}$. Typically, we can force this with some evaluation of the variables (some of which might be universally quantified). The other valuations are \emph{degenerate} and we must ensure at least that they map $\mathcal{G}_\Phi$ restricted to $\{v_1,\ldots,v_m\}$ homomorphically to $\mathcal{C}$.

\section{Small cycles}
\label{sec:small}

The classification for QCSP for cycles of length $\leq 4$ is slightly esoteric, although it does match the analogous classification for Retraction (the former is a dichotomy between NL and Pspace-complete; the latter is a dichotomy between P and NP-complete). The following has appeared in \cite{CountingQuantifiers}, where it was given as an application of counting quantifiers in QCSPs. We review it now because we will need to generalise it later in Section~\ref{sec:reflexive}.
\begin{proposition}[\cite{CountingQuantifiers}]
\label{prop:C1111}
QCSP$(\mathcal{C}_{1111})$ is Pspace-complete.
\end{proposition}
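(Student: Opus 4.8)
The plan is to establish membership in \Pspace\ (which is routine: any QCSP$(\mathcal{B})$ is decided by a recursive evaluation of the quantifier prefix that stores only a constant-size assignment at each of the polynomially many levels) and then to prove \Pspace-hardness by reduction from the canonical problem QSAT, the truth of a quantified $3$-CNF with prefix $\exists x_1 \forall x_2 \exists x_3 \cdots$. The intended correspondence is to encode the Boolean values FALSE and TRUE by the two antipodal (non-adjacent) vertices $1$ and $3$ of $\mathcal{C}_{1111}$, to map each Boolean quantifier to a quantifier over a vertex-variable of the same type, and to encode each clause by a small positive Horn gadget built from edge atoms and auxiliary existential vertices, so that $\mathcal{G}_\Phi$ restricted to the encoding variables maps homomorphically (automorphically, in the honest case) to $\mathcal{C}_{1111}$.

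Before anything else one must confront the structural obstacle that makes this delicate: positive Horn logic over $\mathcal{C}_{1111}$ has no constants, and the automorphism group of $\mathcal{C}_{1111}$ (the dihedral group acting transitively on $\{1,2,3,4\}$) implies that no primitive positive formula can pin a variable to the two-element truth domain $\{1,3\}$. Hence an existential variable cannot simply be \emph{restricted} to $\{1,3\}$, and — more seriously — a universal variable genuinely ranges over all four vertices, including the off-diagonal values $2$ and $4$. The heart of the reduction is therefore a \emph{neutralisation} gadget: for each universal variable $v$ I would adjoin auxiliary existential variables and edge constraints so that (i) when $v$ takes one of the honest antipodal values the gadget faithfully presents FALSE or TRUE to the subsequent existential player, while (ii) when $v$ takes $2$ or $4$ the remainder of the sentence becomes trivially satisfiable, so these degenerate assignments can never help the universal adversary. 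This is the standard device by which dishonest universal plays are rendered harmless.

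With the truth domain and the neutralisation gadget in place, the clause gadget is designed so that, under the antipodal encoding, a conjunction of edge atoms together with auxiliary existential vertices is extendable to $\mathcal{C}_{1111}$ exactly when at least one literal is satisfied; the full QSAT instance is then simulated by stringing these gadgets together under the inherited prefix. I expect the main obstacle to be precisely the correctness verification in both directions: proving that honest play reproduces the truth value of the quantified formula, while simultaneously checking that \emph{every} degenerate assignment — off-diagonal universal values and spurious evaluations of the auxiliary variables — still leaves the sentence satisfiable, so that no such assignment manufactures a win for the wrong player. An attractive alternative, in keeping with the later sections, would be to bypass the ad hoc QSAT encoding and reduce instead from a suitable quantified version of a retraction problem, since $\Ret(\mathcal{C}_{1111})$ is already \NP-complete on the retraction side of the small-cycle classification and the additional \Pspace\ power need only be supplied by the alternation of the universal quantifiers.
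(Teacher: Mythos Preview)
Your proposal is a plan rather than a proof: you identify the obstacles correctly (no constants, vertex-transitive automorphism group, universal variables genuinely ranging over all four vertices) but the two gadgets you promise --- the ``neutralisation'' gadget that renders off-diagonal universal plays harmless, and the clause gadget that detects satisfaction under the antipodal $\{1,3\}$ encoding --- are never constructed. On a reflexive $4$-cycle it is not at all obvious that a positive-Horn fragment exists which collapses to triviality precisely when a designated vertex takes value $2$ or $4$ yet behaves faithfully on $\{1,3\}$; absent an explicit construction and verification, this is the gap. Your closing remark about reducing from a quantified retraction problem is closer in spirit to what the paper does, but again is only a gesture.

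The paper's argument is quite different and sidesteps any Boolean encoding. It reduces from QCSP$(\mathcal{K}_4)$ (already Pspace-complete) rather than QSAT, and borrows the Feder--Hell edge gadget from the list-homomorphism literature: a chain of copies of $\mathcal{C}_{1111}$ plus a pendant which, once the terminal copy is mapped automorphically onto $\mathcal{C}_{1111}$, allows its two free ends $x,y$ to be any pair of \emph{distinct} vertices --- exactly simulating an edge of $\mathcal{K}_4$. The frame $v_1,\ldots,v_4$ is forced onto an automorphic image not by per-variable neutralisation but by a \emph{counting quantifier} trick: one defines $\exists^{\ge 2} x\,\phi(x) := \forall x'\exists x\, E(x',x)\wedge\phi(x)$ and $\exists^{\ge 3}$ analogously, and checks that on $\mathcal{C}_{1111}$ these really mean ``at least $i$ witnesses''. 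Quantifying the frame as $\exists v_1\,\exists^{\ge 2} v_2\,\exists^{\ge 3} v_3\,\exists^{\ge 2} v_4$ then guarantees that some evaluation hits all four vertices, and a case analysis shows that every non-automorphic (degenerate) evaluation of the frame makes the edge gadget extend under \emph{all} choices of $x,y$, so those branches are harmless. This moves the whole burden from each universal variable to a single outermost block, and the edge gadget comes pre-verified from~\cite{FederHell98}; by contrast your plan requires bespoke gadgets whose existence you have not established.
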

\begin{proof}
We will reduce from the problem QCSP$(\mathcal{K}_4)$ (known to be Pspace-complete from, e.g., \cite{BBCJK}). We will borrow heavily from the reduction of CSP$(\mathcal{K}_4)$ to Ret$(\mathcal{C}_{1111})$ in \cite{FederHell98}. We introduce the following shorthands ($x',x''$ must appear nowhere else in $\phi$, which may contain more free variables than just $x$).
\[
\begin{array}{l}
\exists^{\geq 1} x \ \phi(x) := \exists x \ \phi(x) \\
\exists^{\geq 2} x \ \phi(x) := \forall x' \exists x \ E(x',x) \wedge \phi(x) \\ 
\exists^{\geq 3} x \ \phi(x) := \forall x'' \forall x' \exists x \ E(x'',x) \wedge E(x',x) \wedge \phi(x) \\ 
\end{array}
\]
On $\mathcal{C}_{1111}$, it is easy to verify that, for each $i \in [4]$, $\exists^{\geq i} x \ \phi(x)$ holds iff there exist at least $i$ elements $x$ satisfying $\phi$. Thus our borrowing the notation of counting quantifiers is justified.

We now reduce an instance $\Phi$ of QCSP$(\mathcal{K}_4)$ to an instance $\Psi$ of QCSP$(\mathcal{C}_{1111})$.

We begin with a cycle $\mathcal{C}_{1111}$ on vertices $1$, $2$, $3$ and $4$, which we realise through their canonical query (without quantification) as $\theta(v_1,v_2,v_3,v_4):=$ $E(v_1,v_2) \wedge E(v_2,v_3) \wedge E(v_3,v_4) \wedge E(v_4,v_1)$ (recall that the canonical query is in fact the reflexive closure of this, but this will not be important in this case or many future cases -- when it is important it will be stated explicitly).
If $\Phi$ contains an atom $E(x,y)$, then this gives rise to a series of atoms in $\Psi$ as dictated by the gadget in Figure~\ref{fig:surhomcref4} (for each atom we add many new vertex-variables, corresponding to the vertices in the gadget that are not $x,y,1,2,3,4$).
\begin{figure}
\begin{center}
\resizebox{!}{2.5cm}{\includegraphics{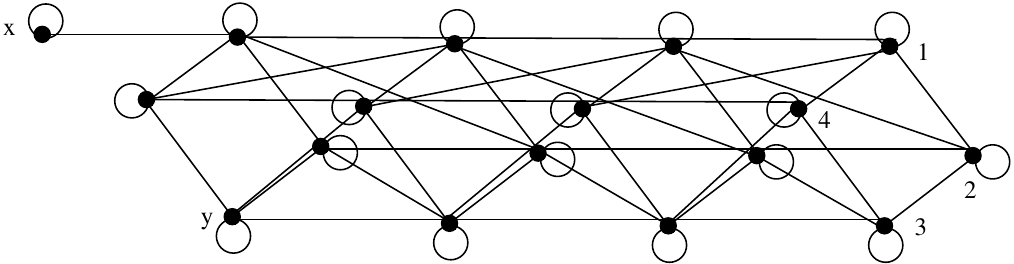}}
\end{center}
\caption{Edge gadget in reduction from QCSP$(\mathcal{K}_4)$ to  QCSP$(\mathcal{C}_{1111})$}
\label{fig:surhomcref4}
\end{figure}
These atoms can be seen to join up with the atoms of $\theta$ as in the right end of the figure. Build $\Psi'''$ from $\Phi$ by this process and conjunction with $\theta$. Then make $\Psi''$ from $\Psi'''$ by existentially quantifying all of the variables other than those associated to atoms of $\Phi$ ($x$, $y$ in the figure) and $v_1,v_2,v_3,v_4$ ($1,\ldots,4$ in the figure). Now, we build $\Psi'$ from $\Psi''$, by copying the quantifier order of $\Phi$ on the outside of the existential quantifiers that we already have. Thus $\Psi'(v_1,v_2,v_3,v_4)$ is a positive Horn formula with precisely four free variables.

It is not hard to see that when $v_1,v_2,v_3,v_4$ are evaluated as (an automorphism of) $1,2,3,4$, then we have a faithful simulation of QCSP$(\mathcal{K}_4)$. This is because $x$ and $y$, as in the gadget drawn, may evaluate precisely to distinct vertices on $\mathcal{C}_{1111}$. Finally, we build $\Psi:= \exists v_1 \exists^{\geq 2} v_2 \exists^{\geq 3} v_3 \exists^{\geq 2} v_4 \ \Psi'(v_1,v_2,v_3,v_4)$.
It is not hard to see that $\Psi$ forces on some evaluation of $v_1,v_2,v_3,v_4$ that these map isomorphically to $1,2,3,4$ in $\mathcal{C}_{1111}$. Further, a rudimentary case analysis shows us that when they do not, we can still evaluate the remainder of $\Psi'$, if we could have done when they did. In fact, if $v_1,v_2,v_3,v_4$ are not mapped isomorphically (but still homomorphically, of course) to $1,2,3,4$ then we can extend each of the edge gadgets to homomorphism under \textbf{all maps} of vertices $x$ and $y$ to $1,2,3,4$ (not just ones in which $x$ and $y$ are evaluated differently). 
\end{proof}
\begin{proposition}
\label{prop:C0111}
QCSP$(\mathcal{C}_{0111})$ is in L.
\end{proposition}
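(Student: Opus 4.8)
The plan is to exploit the very rigid structure of $\mathcal{C}_{0111}$. Writing its neighbourhoods as $N(1)=\{2,4\}$, $N(2)=\{1,2,3\}$, $N(3)=\{2,3,4\}$ and $N(4)=\{1,3,4\}$, the vertices $\{2,3,4\}$ induce a reflexive path (carrying the median majority polymorphism $m$), while the single irreflexive vertex $1$ is a ``pole connector'' joining the two ends $2$ and $4$. Note that the constant map to any reflexive vertex is an endomorphism, and that there is a retraction $r:\mathcal{C}_{0111}\to\{2,3,4\}$ with $r(1)=3$. I would begin with two syntactic reductions on the input sentence $\Phi=Q_1x_1\cdots Q_nx_n\,\psi$. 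First, if some universally quantified variable $u$ carries a self-loop atom $E(u,u)$, then Spoiler sets $u=1$ and $E(1,1)$ fails, so $\Phi$ is false. Second, if an edge atom joins two universal variables $u,u'$, then Spoiler sets $u=u'=1$ and again the atom fails. Hence, after a logspace scan, I may assume the universal variables form a loop-free independent set in $\mathcal{G}_\psi$, so that every edge atom is existential--existential or universal--existential.

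Under this assumption the residual game is a \lHom-type problem: once Spoiler has committed values to the universals, each existential neighbour $z$ of a set of universals inherits the \emph{list} $\bigcap N(a)$, intersected over the values $a$ those universals received, and Duplicator must complete a homomorphism of $\mathcal{G}_\psi$ into $\mathcal{C}_{0111}$ respecting these lists. Because every existential may always retreat to the reflexive centre $3$ via $r$, the only genuine obstructions come from $1$: Spoiler's best play is to drive the lists toward the poles, since $\{2,4\}$ induces no edge ($2\not\sim 4$) whereas everything else is absorbed into the reflexive path. Concretely, the dangerous lists are $\{2\}=N(1)\cap N(2)$, $\{4\}=N(1)\cap N(4)$ and $\{2,4\}=N(1)\cap N(1)$, and the characteristic failure is when Spoiler can force two \emph{adjacent} existentials onto the distinct poles $2$ and $4$. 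I would therefore encode Duplicator's completion as a system of unit-difference and list constraints on the line $2<3<4$ (adjacent variables differ by at most one, with the pole lists and loop-forced lists as unary restrictions); consistency of such a system is an undirected reachability question, hence solvable in \Logspace\ by Reingold's theorem.

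The hard part, and the main obstacle, is the quantifier alternation together with the absence of a majority polymorphism: an existential's inherited list depends on the values Spoiler gave to \emph{earlier} universals, so one cannot simply quantify out all universals at once, and since $\mathcal{C}_{0111}$ omits majority and is a Q-core the standard bounded-width route is unavailable. Here I would adapt the strategy-combination (collapsibility-style) machinery of \cite{LICS2011}. The aim is to show that the alternating game collapses to a logspace-constructible family of ``canonical'' list assignments -- essentially the realizable pole patterns -- so that $\Phi$ is true iff Duplicator wins each canonical instance, each being a list-homomorphism instance decided by the reachability test above. Given Duplicator's responses to the canonical plays, one transports them to an arbitrary play by pushing every universal value (in particular $1$) onto the reflexive path through $r$ and amalgamating the finitely many partial strategies with the median $m$, always respecting the quantifier order so that an existential depends only on earlier universals. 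Verifying that $r$ and $m$ interact correctly at vertex $1$ -- that a universal placed at $2$ or $4$, whose neighbourhood still contains the dangerous vertex $1$, never creates an obligation not already met by the canonical pole plays -- is the delicate step where the special status of $\mathcal{C}_{0111}$ forces a bespoke argument rather than a black-box polymorphism criterion.

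Finally, assembling the pieces keeps everything in logspace: the two preprocessing checks are syntactic scans, the canonical list instances are produced by a logspace transducer from $\Phi$, and each is decided by undirected reachability in \Logspace. Combining these yields QCSP$(\mathcal{C}_{0111})\in\Logspace$.
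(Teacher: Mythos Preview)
Your approach diverges substantially from the paper's. The paper gives a direct combinatorial algorithm: after relativising universals to $\{4,1,2\}$ and existentials to $\{2,3,4\}$ (via a surjective hyper-endomorphism argument), it exhibits four explicit forbidden subinstances of $\mathcal{G}_\Phi$ (with quantifier-order constraints) whose presence is equivalent to $\Phi$ being false on $\mathcal{C}_{0111}$, and checks for them in logspace using Reingold's theorem. Prover's winning strategy on pattern-free instances is simply ``play $3$ whenever possible''; the forbidden patterns are precisely the configurations in which Adversary can propagate a forced $2$ or $4$ along an existential chain until it collides with an incompatible constraint.

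Your proposal has two genuine gaps. First, your preprocessing is incomplete: you eliminate self-loops on universals and edges between two universals, but not an edge $E(z,u)$ where $z$ is existential and $u$ is a \emph{later} universal. Such an edge is already fatal for Prover (whatever value she commits $z$ to, Adversary can choose $u$ non-adjacent to it), and this is exactly the paper's forbidden case~$(i)$. Second, and more seriously, the collapsibility step is only gestured at, and the specific tools you propose do not suffice. The median $m$ is a polymorphism of the reflexive path on $\{2,3,4\}$, but it is \emph{not} a polymorphism of $\mathcal{C}_{0111}$: amalgamating strategies across plays in which some universal takes the value $1$ requires a polymorphism of the full template, not merely of the existential range, because the universal--existential edges must be preserved. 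Transporting through the retraction $r$ (sending $1\mapsto 3$) does not help either, since a Prover response that is valid against $u=3$ (e.g.\ $z=3$) need not be valid against $u=1$ (as $E(1,3)$ fails). You flag this as ``the delicate step'' but do not resolve it, and your ``canonical list assignments'' are never actually specified. Interestingly, the paper's conclusion records that $\mathcal{C}_{0111}$ \emph{does} admit a ternary polymorphism $f$ with a constant $c$ such that $f(c,\cdot,\cdot)$, $f(\cdot,c,\cdot)$, $f(\cdot,\cdot,c)$ are all surjective, yielding $2$-collapsibility and hence an alternative NL proof via \cite{hubie-sicomp} --- so your overall strategy is in principle salvageable, but with a different polymorphism than the median-plus-retraction combination you put forward.
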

\begin{proof}
Recall $\mathcal{C}_{0111}$ has vertices $\{1,2,3,4\}$ in cyclic order and $1$ is the only non-loop. 
Consider an input $\Phi$ for QCSP$(\mathcal{C}_{0111})$, \mbox{w.l.o.g.} without any equalities, and its evaluation on $\mathcal{C}_{0111}$ as a game on $\Phi$ between \emph{Prover}, playing (evaluating on $\mathcal{C}_{0111}$) existential variables, and \emph{Adversary}, playing universal variables. Adversary never gains by playing $3$, as any existential edge witness to anything from $\{4,1,2\}$ is already an edge-witness to $3$. That is, if $E(x,3)$ then already each of $E(x,4), E(x,1)$ and $E(x,2)$. Similarly, Prover never gains by playing $1$. Thus, $\Phi$ is true on  $\mathcal{C}_{0111}$ iff it is true with all universal variables relativised to  $\{4,1,2\}$ and all existential variables relativised to  $\{2,3,4\}$. (This intuition is formalised in the notion of $U$-$X$-surjective hyper-endomorphism in \cite{LICS2011}. What we are saying is that $\shefour{13}{2}{3}{4}$ is a surjective hyper-endomorphism of $\mathcal{C}_{0111}$.) Henceforth, we will make this assumption of relativisation in our inputs.

Given an input $\Phi$ we will describe a procedure to establish whether it is true on $\mathcal{C}_{0111}$ based around a list of forbidden subinstances.
\begin{itemize}
\item[$(i.)$] An edge $E(x,y)$ in $\mathcal{G}_\Phi$ with the later of $x$ and $y$ being universal.
\item[$(ii.)$] A $3$-star $E(x_1,y), E(x_2,y), E(x_3,y)$ where $x_1,x_2,x_3$ are universal variables coming before $y$ existential.
\item[$(iii.)$] A path $y_1,\ldots,y_m$ of existential variables, where: both $y_1$ and $y_m$ have edges to \textbf{two} earlier universal variables, and $y_2,\ldots,y_{m-1}$ have edges each to \textbf{one} earlier universal variable.
\item[$(iv.)$] A path $y_1,\ldots,y_m$ of existential variables, where $y_1$ comes before $y_m$, and $y_1$ has an edge to an earlier universal variable. $y_m$ has edges to two earlier universal variables at least one of which comes after $y_1$. Finally, $y_2,\ldots,y_{m-1}$ each have edges to an earlier universal variable.  
\end{itemize}
These cases are illustrated in Figure~\ref{fig:not-yet-drawn}. Using the celebrated result of \cite{RheingoldJACM} it can be seen that one may recognise in logspace whether or not $\Phi$ contains any of these forbidden subinstances. It is not hard to see that if $\Phi$ contains such a subinstance then $\Phi$ is false on $\mathcal{C}_{0111}$ (the universal variables adjacent and before $y_m$ can be played as either $1,2$ or $1,4$ to force $y_m$ to be either $2$ or $4$). We now claim all other instances $\Phi$ are true on $\mathcal{C}_{0111}$ and we demonstrate this by giving a winning strategy for Prover on such an instance. Owing to Case $(i)$ being omitted, Adversary has no trivial win. Prover will now \emph{always play $3$ if she can}. Owing to the omission of Case $(ii)$, Prover never has to answer a variable adjacent to more than two elements. It can be seen that there are few circumstances in which she can not play $3$. Indeed, the only one is if she is forced at some point to play $2$ or $4$ as a neighbour to Adversary's having played $1$. In this case, Adversary can force this response to be propagated as in the chain of cases $(iii)$ and $(iv)$, but because these cases are forbidden, Adversary will never succeed here in winning the game.
\end{proof}
\begin{figure}
\begin{center}
\resizebox{!}{4cm}{\input{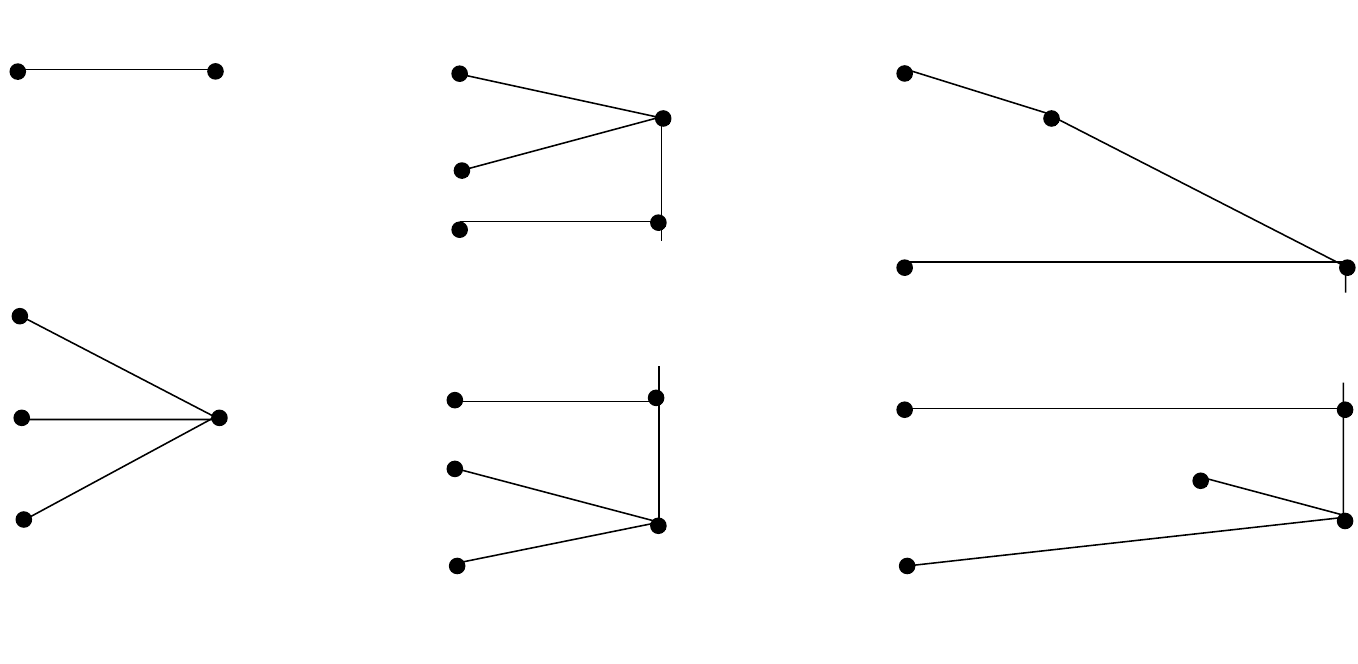tex_t}}
\end{center}
\caption{Cases from proof of Proposition~\ref{prop:C0111}.}
\label{fig:not-yet-drawn}
\end{figure}

\section{The reflexive cycles}
\label{sec:reflexive}

\begin{figure}
\begin{center}
\resizebox{!}{2cm}{\includegraphics{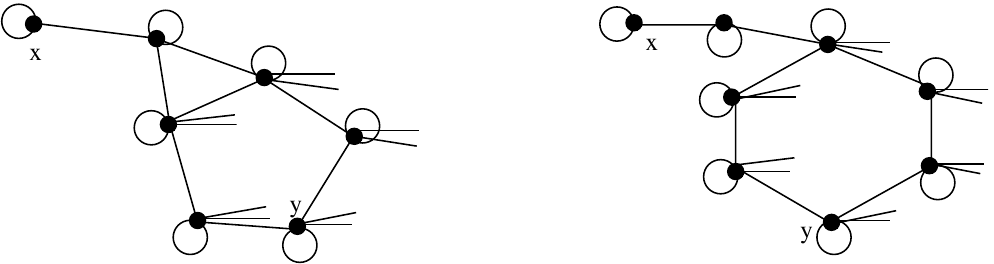}}
\end{center}
\caption{Left ends of the edge gadgets in reductions from QCSP$(\mathcal{K}_5)$ to QCSP$(\mathcal{C}_{1^5})$ and QCSP$(\mathcal{K}_6)$ to QCSP$(\mathcal{C}_{1^6})$. In the former case, the full gadget contains a chain of five copies of $\mathcal{C}_{1^5}$; in the latter case it is a chain of six copies of $\mathcal{C}_{1^6}$}
\label{fig:surhomcref56}
\end{figure}
We will use similar edge gadgets to those of Figure~\ref{fig:surhomcref4} to prove NP-hardness of QCSP$(\mathcal{C}_{1^m})$, for $m\geq 4$. If $m\geq 4$ is even, then the edge gadget $\mathcal{E}_m$ consists of $m$ copies of $\mathcal{C}_{1^m}$ where each copy -- with vertices $1,\ldots,m$, is connected to its successor by edges joining vertex $k$ with vertices $k$ and $k+1$ ($\bmod\ m$). In the first of the copies, the vertex $\frac{m}{2}+1$ is labelled $y$ and a reflexive path of length $\frac{m}{2}-1$ is added to the vertex labelled $1$, which culminates in a vertex labelled $x$. The last of the copies of $\mathcal{C}_{1^m}$ retains the vertex labelling $1,\ldots,m$ -- we consider the other vertices (except for $x$ and $y$) to become unlabelled. Of course,  $\mathcal{E}_4$ is already drawn in Figure~\ref{fig:surhomcref4}. The left end of $\mathcal{E}_6$ is drawn in Figure~\ref{fig:surhomcref56} (to the right). If $m\geq 4$ is odd, then the edge gadget $\mathcal{E}_m$ is drawn in a similar manner, except vertex $\frac{m+1}{2}+1$ becomes $y$ and the reflexive path of length $\frac{m-3}{2}$ that culminates in $x$ at one end and at the other end a vertex that makes a triangle with vertices $1$ and $2$, respectively. The left end of $\mathcal{E}_5$ is drawn in Figure~\ref{fig:surhomcref56} (to the left). These gadgets are borrowed from \cite{FederHell98} and have the property that when the right-hand cycle $\mathcal{C}_{1^m}$ is evaluated automorphically to itself then the rest of the cycles are also evaluated automorphically (but may twist $\frac{1}{m}$th each turn -- this is why we have $m$ copies; $m$ is a minimum number, more copies would still work). Finally, in the left-hand cycle it can be seen that $x$ can be evaluated anywhere except $y$.

Just as in Proposition~\ref{prop:C1111}, we want to try to force vertex-variables $v_1,\ldots,v_m$, corresponding to $1,\ldots,m$, to be evaluated (up to isomorphism) around $\mathcal{C}_{1^m}$.
\begin{proposition}
\label{prop:reflexive}
QCSP$(\mathcal{C}_{1^m})$, for any $m \geq 4$ is NP-hard.
\end{proposition}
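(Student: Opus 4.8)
The plan is to reduce from the NP-complete problem CSP$(\mathcal{K}_m)$ (that is, $m$-colourability, which is NP-complete since $m\geq 4$), mirroring the reduction of CSP$(\mathcal{K}_m)$ to Ret$(\mathcal{C}_{1^m})$ in \cite{FederHell98} via the edge gadgets $\mathcal{E}_m$ described above. Given an instance $\Phi$ of CSP$(\mathcal{K}_m)$ --- a graph whose vertices are the variables and whose edges are the atoms $E(x,y)$ --- I would first build the positive conjunctive core $\Psi'(v_1,\ldots,v_m)$: start from $\theta:=E(v_1,v_2)\wedge\cdots\wedge E(v_m,v_1)$, the canonical query of the reference cycle (its reflexive closure, since every vertex of $\mathcal{C}_{1^m}$ carries a loop), and for each atom $E(x,y)$ of $\Phi$ splice in a fresh copy of $\mathcal{E}_m$ whose right-hand cycle is identified with $v_1,\ldots,v_m$ and whose distinguished endpoints are the colour-variables $x,y$. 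Existentially quantifying all internal gadget vertices and all colour-variables $x,y$ then yields a positive Horn $\Psi'(v_1,\ldots,v_m)$ with exactly the $m$ free variables $v_1,\ldots,v_m$.

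The soundness of this core rests on the stated property of $\mathcal{E}_m$: when $v_1,\ldots,v_m$ is evaluated as an automorphic copy of $\mathcal{C}_{1^m}$, each gadget propagates this automorphism across its chain of cycles and, in the left-hand cycle, permits its endpoint $x$ to sit anywhere except the vertex forced on $y$; that is, it faithfully enforces $x\neq y$. Hence, conditioned on $v_1,\ldots,v_m$ being automorphic, $\Psi'$ is satisfiable exactly when the colour-variables admit a proper $m$-colouring of $\Phi$. Conversely, when $v_1,\ldots,v_m$ folds the reference cycle (a \emph{degenerate} valuation that is merely homomorphic), each gadget becomes satisfiable under \textbf{all} placements of $x,y$, so the colouring constraints evaporate and the core is trivially true --- exactly as in the final paragraph of Proposition~\ref{prop:C1111}. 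Thus the degenerate valuations are the \emph{easy} ones for Prover and the automorphic valuation is the hardest, and the reduction will succeed provided we can prepend a quantifier prefix that forces Prover to confront an automorphic reference cycle.

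Finally I would supply that forcing prefix, in the spirit of the $\exists^{\geq i}$ counting quantifiers of Proposition~\ref{prop:C1111}, so that $\Psi:=(\mathrm{prefix})\,\Psi'(v_1,\ldots,v_m)$ is true on $\mathcal{C}_{1^m}$ precisely when Adversary can force $v_1,\ldots,v_m$ to spread across all $m$ vertices. The intended reading is: if $\Phi$ is $m$-colourable then Prover fixes a colouring and survives every reference cycle, automorphic or degenerate, so $\Psi$ is true; if $\Phi$ is not $m$-colourable then Adversary plays the universal variables of the prefix so as to drive $v_1,\ldots,v_m$ to an automorphic configuration, the gadgets bite as genuine inequalities, and Prover is defeated, so $\Psi$ is false.

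I expect the \textbf{main obstacle} to be the construction and verification of this prefix. The naive common-neighbour counting quantifiers used for $\mathcal{C}_{1111}$ do not survive the passage to larger $m$: on $\mathcal{C}_{1^m}$ with $m\geq 5$ two vertices at cyclic distance $\geq 3$ share no common neighbour, so $\exists^{\geq 3}$ as written there is identically false. I would therefore replace it by a forcing that exploits the adjacencies of the reference cycle itself, together with a bounded number of universal anchor variables, to pin the $v_i$ to successive vertices; the delicate point is the accompanying \emph{case analysis} showing that every non-automorphic response of Prover to the anchors is dominated by the automorphic one, so that degenerate plays never help Prover. This is the analogue of the rudimentary case analysis closing Proposition~\ref{prop:C1111}, and here it is the genuinely non-routine part of the argument.
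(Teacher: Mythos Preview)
Your approach is the paper's: reduce from CSP$(\mathcal{K}_m)$, assemble $\Psi'(v_1,\ldots,v_m)$ from copies of $\mathcal{E}_m$ with all colour-variables and internal gadget vertices existentially quantified, then prepend a prefix that forces some Adversary play to make $v_1,\ldots,v_m$ automorphic. You are also right that the $\exists^{\geq i}$ trick of Proposition~\ref{prop:C1111} collapses for $m\geq 5$. The paper resolves this with a single device
\[
\diamondsuit x\ \phi(x):=\forall x'\,\exists x_1,\ldots,x_{\lceil m/2\rceil-2}\,\exists x\ E(x',x_1)\wedge\cdots\wedge E(x_{\lceil m/2\rceil-2},x)\wedge\phi(x),
\]
which on $\mathcal{C}_{1^m}$ is satisfied iff $x$ ranges over at least two values ($m$ even) or at least two non-adjacent values ($m$ odd). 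One genuinely universal antipodal anchor combined with one $\diamondsuit$ then suffices to force the reference cycle; the prefix is shorter and more uniform than you anticipate.

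There is, however, a real gap in your handling of the degenerate cases. You assert that under a non-injective fold of $v_1,\ldots,v_m$ ``each gadget becomes satisfiable under \textbf{all} placements of $x,y$''. This is precisely the content of the paper's closing Conjecture~\ref{lem:gen}, stated there as what would be needed to upgrade the bound to Pspace-hardness, and explicitly left open. The NP-hardness proof neither needs nor uses it. Because you reduce from CSP rather than QCSP, every variable outside the prefix --- in particular every colour variable $x,y$ --- is existential. Hence if $\Phi$ is a yes-instance, fix once and for all a homomorphism $g$ from the entire gadget graph to $\mathcal{C}_{1^m}$ extending $v_i\mapsto i$; for any degenerate fold $f$ of the reference cycle, Prover simply plays the remaining existentials according to $f\circ g$. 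No per-gadget case analysis is required. So the ``delicate point'' you flag dissolves, while the part you treat as routine (degeneracy via the $m=4$ argument) is where your justification currently over-claims.
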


\section{Cycles whose loops induce a path}
\label{sec:path}

We begin by recalling the result for QCSP$(\mathcal{P}_{101})$ from  \cite{QCSPforests}, on which our proof for Propositions~\ref{prop:missing} and Corollary~\ref{cor:disconnected} will be based. 
In this proof we introduce the notions of \emph{pattern} and \emph{$\forall$-selector} that will recur in the sequel.
\begin{figure}
\begin{center}
\resizebox{!}{2.5cm}{\input{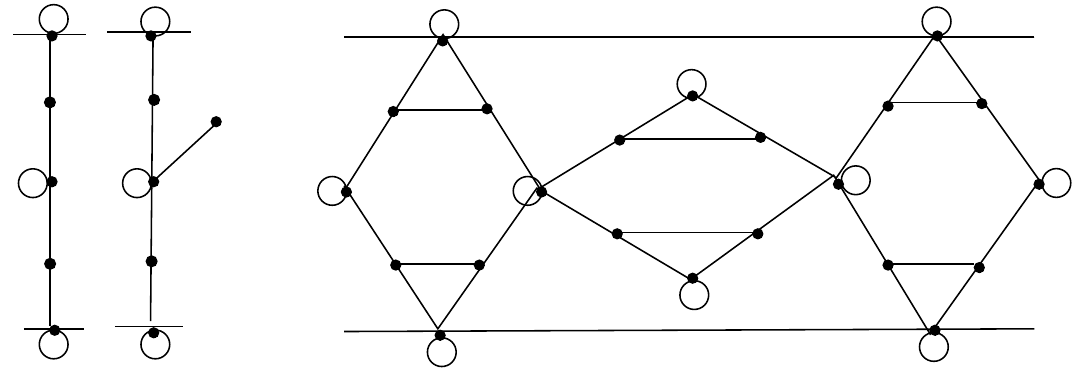tex_t}}
\end{center}
\caption{Variable and clause gadgets in reduction to QCSP$(\mathcal{P}_{101})$.}
\label{fig:gadgets}
\end{figure}
\begin{proposition}
\label{prop:P101}
QCSP$(\mathcal{P}_{101})$ is Pspace-complete.
\end{proposition}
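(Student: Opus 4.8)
The containment of QCSP$(\mathcal{P}_{101})$ in $\Pspace$ is immediate, since QCSP$(\mathcal{B})$ lies in $\Pspace$ for every finite $\mathcal{B}$ (as noted in the introduction), so all the work is in the hardness. The plan is to reduce from the evaluation of a quantified Boolean formula (QBF), the canonical $\Pspace$-complete problem. Fix a prenex instance $\Theta = Q_1 z_1 \cdots Q_n z_n\ \psi$ with $\psi = \bigwedge_j C_j$ in CNF. I will encode the two truth values $\bot,\top$ by the two \emph{looped} vertices $1,3$ of $\mathcal{P}_{101}$, regarding the unique \emph{loopless} vertex $2$ as a junk value, and build a positive Horn sentence $\Phi$ whose quantifier prefix refines that of $\Theta$ and whose matrix is a conjunction of clause gadgets, so that $\Phi$ holds on $\mathcal{P}_{101}$ iff $\Theta$ is true.

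The first ingredient is the \emph{$\forall$-selector}, which converts a genuinely universal play over $\{1,2,3\}$ into a Boolean. For a universally quantified $z_i$ I introduce a universal variable $w_i$ and an existential $s_i$ with the atoms $E(w_i,s_i) \wedge E(s_i,s_i)$. Since only $1,3$ are looped, $w_i=1$ forces $s_i=1$ and $w_i=3$ forces $s_i=3$, whereas the junk move $w_i=2$ leaves Prover free to pick $s_i\in\{1,3\}$; thus the \emph{selected} value $s_i$ is what is fed to the clauses, and the Adversary's junk values can only help Prover. An existentially quantified $z_i$ is encoded directly by an existential $s_i$ with a loop atom $E(s_i,s_i)$, confining $s_i$ to $\{1,3\}$ at Prover's choice. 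Literals are read off the $s_i$ (with a reference-based inversion for negated occurrences), and the prefix of $\Phi$ is obtained by threading each selector's $\exists s_i$ immediately after the corresponding $Q$-variable.

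The delicate ingredient is the \emph{clause gadget}, and here a structural obstruction must be respected: because $1$ and $3$ carry loops, the constant maps to $1$ and to $3$ are homomorphisms from every graph into $\mathcal{P}_{101}$, so no primitive positive formula can be made false by an all-$1$ or an all-$3$ assignment. A clause must exclude its all-false assignment, so it \emph{cannot} be realised by a plain CSP-style (pp) gadget -- the universal quantifier is indispensable. Accordingly the gadget for $C_j$ is wired to auxiliary $\top$- and $\bot$-poles produced by $\forall$-selectors, together with an auxiliary universal variable that exploits the loopless value $2$: when that variable is played as $2$ its non-loop propagates a constraint along the gadget that is violated precisely when all three literals are false, while for the looped plays $1,3$ Prover retains enough slack to satisfy the atoms. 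Describing this wiring is exactly the role of the \emph{pattern} bookkeeping -- a classification of the partial homomorphism types a play induces on the gadget -- and the gadgets are adapted from the retraction constructions of \cite{FederHell98}.

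Correctness is then a two-player argument identifying Prover with $\exists$ and Adversary with $\forall$. In the forward direction, a winning strategy for $\exists$ in the QBF game is mimicked by Prover playing only looped values, and the clause gadgets are satisfiable exactly because $\psi$ is satisfied. In the backward direction, when $\Theta$ is false, Adversary plays the QBF-universals according to a winning $\forall$-strategy (always choosing looped values, so the selectors report genuine Booleans) and, in the clause whose literals all come out false, plays the gadget's auxiliary universal at the value $2$, leaving Prover with no consistent completion. The main obstacle -- and where the bulk of the effort lies -- is the clause-gadget construction together with the attendant case analysis over \emph{all} plays, including the degenerate ones where universal variables take the value $2$: one must verify both that honest Boolean play faithfully simulates each $C_j$ and that no junk play lets the Adversary win a game she should lose. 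The \emph{pattern} and \emph{$\forall$-selector} formalism introduced here is precisely the apparatus for carrying out that verification uniformly, and it is reused in Proposition~\ref{prop:missing} and Corollary~\ref{cor:disconnected}.
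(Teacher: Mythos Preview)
Your high-level scheme---encode Booleans by the looped vertices $1,3$, use a $\forall$-selector $E(w,s)\wedge E(s,s)$ to tame universal plays, and push the work into clause gadgets---is on the right track, and your selector is exactly the paper's. But the heart of the proof, the clause gadget, is not actually constructed: you only list properties it should have and defer to a ``pattern bookkeeping'' you never carry out. That is the gap. Your proposed architecture (an auxiliary universal per clause, encoding OR, plus a ``reference-based inversion'' for negated literals) would oblige you to exhibit a concrete graph and verify, for every configuration of literals in $\{1,3\}$ and every play of the auxiliary in $\{1,2,3\}$, that it behaves as claimed; none of that is done. Your appeal to \cite{FederHell98} is also misplaced: that paper treats \emph{reflexive} graphs and is used here only for $\mathcal{C}_{1111}$, not for $\mathcal{P}_{101}$.

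The paper sidesteps all of this by reducing from quantified \emph{not-all-equal} 3-SAT rather than QBF. Because NAE is invariant under the swap $1\leftrightarrow 3$ and has only positive literals, no negation gadget is needed, and once two reference paths $\top,\bot$ are pinned (by outermost $\forall$-selectors) to $1$ and $3$, a purely existential clause gadget suffices: a diamond whose four sides are copies of the ``pattern'' $\mathcal{P}_{101}$, braced by two horizontal edges, strung between the $\top$- and $\bot$-paths (Figure~\ref{fig:gadgets}). No per-clause universal variable is required. Your correct observation that the constant maps to $1$ or $3$ satisfy every pp formula is precisely why the paper does not attempt OR; with $\top,\bot$ forced to \emph{distinct} values the NAE diamond can be falsified, and the degenerate evaluations in which $\top$ and $\bot$ coincide are dispatched using the symmetry of NAE together with the stipulation that the middle literal $l_2$ is never universal. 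Finally, note that in the paper ``pattern'' is not a classification of partial homomorphism types but simply the name for the copy of $\mathcal{P}_{101}$ used as the building block of the gadgets.
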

\begin{proof}
For hardness, we reduce from \emph{quantified not-all-equal 3-satisfiability}, whose Pspace-completeness is well known \cite{Papa}, where we will ask for the extra condition that no clause has three universal variables (of course, any such instance would be trivially false). From an instance $\Phi$ of QNAESAT we will build an instance $\Psi$ of QCSP$(\mathcal{P}_{101})$ such that $\Phi$ is in QNAE3SAT iff $\Psi$ in QCSP$(\mathcal{P}_{101})$. We will consider the quantifier-free part of $\Psi$, itself a conjunction of atoms, as a graph, and use the language of homomorphisms. 

We begin by describing a graph $\mathcal{G}_\Phi$, whose vertices will give rise to the variables of $\Psi$, and whose edges will give rise to the atoms listed in the quantifier-free part of $\Psi$. Most of these variables will be existentially quantified, but a small handful will be universally quantified. $\mathcal{G}_\Phi$ consists of two reflexive paths, labelled $\top$ and $\bot$ which contain inbetween them gadgets for the clauses and variables of $\Phi$. We begin by assuming that the paths $\top$ and $\bot$ are evaluated to vertices $1$ and $3$ in $P_{101}$, respectively (the two ends of $P_{101}$); later on we will show how we can effectively enforce this. Of course, once one vertex of one of the paths is evaluated to, say, $1$, then that whole path must also be so evaluated -- as the only looped neighbour of $1$ in $\mathcal{P}_{101}$ is $1$. The gadgets are drawn in Figure~\ref{fig:gadgets}. The \emph{pattern} is the path $\mathcal{P}_{101}$, that forms the edges of the diamonds in the clause gadgets as well as the tops and bottoms of the variable gadgets. The diamonds are braced by two horizontal edges, one joining the centres of the top patterns and the other joining the centres of the bottom patterns. 
The \emph{$\forall$-selector} is the path $\mathcal{P}_{10}$, which travels between the universal variable node $v_2$ and the labelled vertex $\forall$. (The remainder of this proof is deferred to the appendix.)
\end{proof}
\begin{proposition}
\label{prop:missing}
Let $\mathcal{C}_{0^d1^e}$ be a cycle in which $e > d+3$ ($d$ odd) or $e > d+2$ ($d$ even). Then QCSP$(\mathcal{C}_{0^d1^e})$ is Pspace-complete.
\end{proposition}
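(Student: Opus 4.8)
Since QCSP is always in \Pspace, only hardness needs proof, and I would obtain it by reducing from quantified not-all-equal $3$-satisfiability (QNAESAT), the source problem of Proposition~\ref{prop:P101}; the plan is to re-use its \emph{pattern} and \emph{$\forall$-selector} gadgets after re-interpreting the single non-loop of $\mathcal{P}_{101}$ as the whole irreflexive block of the cycle. Write the cyclic order of $\mathcal{C}_{0^d1^e}$ as $N_1,\ldots,N_d,L_1,\ldots,L_e$, the $N_i$ being the non-loops and the $L_j$ the loops, so that $L_e$ is adjacent to $N_1$ and $L_1$ to $N_d$. The reading $L_e,N_1,\ldots,N_d,L_1$ is a copy of $\mathcal{P}_{10^d1}$ and plays exactly the role that $\mathcal{P}_{101}$ played before: the $d$ consecutive non-loops form a \emph{barrier} separating the two flanking loops $\bot:=L_e$ and $\top:=L_1$, which (as $e\ge 3$ under the hypotheses) are non-adjacent and serve as the two truth values.

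The reduction follows the architecture of Proposition~\ref{prop:C1111}: I would introduce frame variables $v_1,\ldots,v_{d+e}$ carrying the canonical query of the cycle together with $\mathrm{Ref}(v_{d+1},\ldots,v_{d+e})$ (forcing the loop-block of the frame onto loops), arranged so that at least one evaluation maps the frame automorphically onto $\mathcal{C}_{0^d1^e}$ -- thereby pinning the poles $\top,\bot$ and two disjoint reflexive \emph{anchor} segments inside the loop-block -- while every \emph{degenerate} evaluation still maps $\mathcal{G}_\Psi$ homomorphically and grants the Prover nothing extra. Note that the counting-quantifier device of Proposition~\ref{prop:C1111} is unavailable here (the cycle is not fully reflexive), so this forcing must be engineered from the barrier and the anchors themselves. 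The role of the hypotheses is then visible: the P101 construction needs two long reflexive anchors (the paths $\top$ and $\bot$ of Proposition~\ref{prop:P101}) together with the barrier, and in the cycle the anchors must be accommodated inside the reflexive block of length $e$ while the barrier consumes the block of $d$ non-loops; the inequalities $e>d+3$ ($d$ odd) and $e>d+2$ ($d$ even) are exactly what is needed to fit the anchors and patterns around the barrier without the two sides folding into one another, the one-step difference between the cases being a parity correction for crossing the $d$ non-loops.

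With the poles anchored, I would transcribe the variable and clause gadgets of Figure~\ref{fig:gadgets}: each QNAESAT variable is represented by a cycle-variable, forced -- through a \emph{pattern} copy of $\mathcal{P}_{10^d1}$ and its attachment to the anchors -- to commit to the $\top$-side or the $\bot$-side of the barrier; the clause (diamond) gadgets, braced as before, realise the not-all-equal condition on three committed variables, exploiting that an edge forces adjacency whereas $\top\not\sim\bot$; and the universally quantified QNAESAT variables are handled by the \emph{$\forall$-selector}, now a path $\mathcal{P}_{10^d}$ scaled to the barrier width, which lets the Adversary select a side. Here the fact that the geodesic $\top$-to-$\bot$ route of length $d+1$ runs \emph{through} the barrier, and is strictly shorter than the reflexive detour of length $e-1$ (which holds in both cases since $e>d+2$), is what keeps the Boolean encoding rigid.

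It then remains to check the two implications. Soundness is the routine honest direction: a winning QNAESAT strategy is copied into a Prover strategy that keeps every variable on its chosen side, every pattern taut and the frame automorphic. Completeness is the delicate direction: one must show that a Prover win yields a QNAESAT win, and in particular that the degenerate plays -- in which the Adversary declines to keep the frame automorphic, or tries to route a gadget the long way round the reflexive block -- can always be answered homomorphically without helping the Prover, so that only the faithful plays, which simulate QNAESAT exactly, are decisive. I expect this degenerate-play analysis, hand in hand with the exact length-and-parity bookkeeping across the barrier, to be the main obstacle; it is precisely this bookkeeping that forces the strict inequalities, distinguishes the odd and even cases, and excludes the tractable boundary instances such as $\mathcal{C}_{0111}$ of Proposition~\ref{prop:C0111}.
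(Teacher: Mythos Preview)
Your high-level plan coincides with the paper's: reduce from QNAESAT and re-use the $\mathcal{P}_{101}$ machinery with pattern $\mathcal{P}_{10^d1}$. Two concrete choices, however, diverge from the paper, and one of them is essentially the whole content of the proof.

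\textbf{The $\forall$-selector.} You scale it to the barrier, taking $\mathcal{P}_{10^d}$; the paper scales it to the \emph{reflexive block}, taking $\mathcal{P}_{0^{\lfloor e/2\rfloor}1}$. The selector's job is to let Adversary, by placing the outer universal anywhere on the cycle, push the looped inner vertex to either ``side''; for that the walk must be long enough to sweep across roughly half of the loop block, so its length is governed by $e$, not by $d$. With length $d$ (and $e$ possibly much larger than $d$) the inner vertex can get stuck in the middle of the reflexive arc for many Adversary plays, and the faithful simulation of a universal Boolean variable is not obviously recovered.

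\textbf{The anchoring of $v_\top,v_\bot$.} You propose to carry a full copy of the cycle as a frame and force some evaluation to be an automorphism, in the style of the retraction reductions (Propositions~\ref{prop:odds}--\ref{prop:evens2}). The paper does \emph{not} do this. It pins only $v_\top$ and $v_\bot$, directly, by an explicit prefix: a universal $x_1$ followed by a path of $\lceil d/2\rceil$ irreflexive steps and then $\lfloor (e-d-5)/2\rfloor$ reflexive steps into $v_\top$; the same for $y_1$ into $v_\bot$; and a bridge $[v_\top,z_1,\ldots,z_d,v_\bot]$ across the barrier. The hypotheses $e>d+3$ ($d$ odd) and $e>d+2$ ($d$ even) are read off literally from these indices, and the one-step offset between the two cases is exactly the parity correction $\lceil d/2\rceil$ versus $d/2$ for the distance from the centre of the barrier to the nearest loop. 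You flag this step as ``the main obstacle'' but do not supply it; in the paper it \emph{is} the proof.

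There is also a structural risk in your frame idea. The retraction-style frames of Propositions~\ref{prop:odds}--\ref{prop:evens2} only yield NP-hardness precisely because degenerate frame evaluations are dispatched by composing with a homomorphism, which needs all inner variables to be existential. For Pspace-hardness the degenerate evaluations must coexist with universals inside $\Psi'$; the paper's own Conjecture at the end (for reflexive cycles) records that this kind of degenerate-case analysis is genuinely delicate. So replacing the paper's minimal, direct anchoring by a full cycle frame is more likely to introduce a difficulty than to remove one.
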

\begin{proof}
The reduction is similar to that employed for Proposition~\ref{prop:P101}. We use the pattern $\mathcal{P}_{10^{d}1}$ and $\forall$-selector $\mathcal{P}_{0^{\lfloor \frac{e}{2} \rfloor}1}$. The key part to the reduction is how we get $v_\top$ and $v_\bot$ to evaluate suitably. Let $x_1,\ldots,y_1,\ldots,z_1,\ldots$ be variables not appearing in $\Psi'(v_\top,v_\bot)$ (cf. proof of Proposition~\ref{prop:P101}). In the following, interpret $\lfloor \frac{e-d-5}{2} \rfloor$ to be $0$, if $ \frac{e-d-5}{2}<0$. For $d$ odd, set $\Psi:=\forall x_1 \exists x_2,\ldots,x_{\frac{d+1}{2}}$
\[
\begin{array}{l}
\exists x_{\frac{d+3}{2}},\ldots, x_{\frac{d+3}{2}+\lfloor \frac{e-d-5}{2} \rfloor},\exists v_\top \forall y_1 \exists y_2,\ldots,y_{\frac{d+1}{2}} \exists y_{\frac{d+3}{2}},\ldots, y_{\frac{d+3}{2}+\lfloor \frac{e-d-5}{2} \rfloor}, \exists v_\bot \\
\exists z_1,\ldots,z_{d}  \\
\ [x_1 \Rightarrow x_{\frac{d+3}{2}+\lfloor \frac{e-d-5}{2} \rfloor}, v_\top] \wedge [y_1 \Rightarrow y_{\frac{d+3}{2}+\lfloor \frac{e-d-5}{2} \rfloor}, v_\bot] \wedge [v_\top,z_1,\ldots,z_d,v_\bot] \wedge \\
\mathrm{Ref}(x_{\frac{d+3}{2}},\ldots, x_{\frac{d+3}{2}+\lfloor \frac{e-d-5}{2} \rfloor}) \wedge \mathrm{Ref}(y_{\frac{d+3}{2}},\ldots, y_{\frac{d+3}{2}+\lfloor \frac{e-d-5}{2} \rfloor}) \wedge \Psi'(v_\top,v_\bot)
\end{array}
\]
For $d$ even, set $\Psi:=\forall x_1 \exists x_2,\ldots,x_{\frac{d}{2}}$
\[
\begin{array}{l}
\exists x_{\frac{d+2}{2}},\ldots, x_{\frac{d}{2}+\lfloor \frac{e-d-5}{2} \rfloor},\exists v_\top \forall y_1 \exists y_2,\ldots,y_{\frac{d}{2}} \exists y_{\frac{d+2}{2}},\ldots, y_{\frac{d}{2}+\lfloor \frac{e-d-5}{2} \rfloor}, \exists v_\bot \\
\exists z_1,\ldots,z_{d}  \\
\ [x_1 \Rightarrow x_{\frac{d}{2}+\lfloor \frac{e-d-5}{2} \rfloor}, v_\top] \wedge [y_1 \Rightarrow y_{\frac{d}{2}+\lfloor \frac{e-d-5}{2} \rfloor}, v_\bot] \wedge [v_\top,z_1,\ldots,z_d,v_\bot] \wedge \\
\mathrm{Ref}(x_{\frac{d+2}{2}},\ldots, x_{\frac{d}{2}+\lfloor \frac{e-d-5}{2} \rfloor}) \wedge \mathrm{Ref}(y_{\frac{d+2}{2}},\ldots, y_{\frac{d}{2}+\lfloor \frac{e-d-5}{2} \rfloor}) \wedge \Psi'(v_\top,v_\bot)
\end{array}
\]
\end{proof}
\noindent Note how the previous proof breaks down in boundary cases, for example on the cycle $\mathcal{C}_{0^21^3}$.

The following proofs make use of reductions from Ret$(\mathcal{C})$, where $|C|=m$. It is ultimately intended that the variables $v_1,\ldots,v_m$ in the created instance map automorphically to the cycle. The cycle will be found when the universally quantified $v_1$ is mapped to a non-looped vertex at maximal distance from the looped vertices (sometimes this is unique, other times there are two). We then require that the universally quantified $x_1$ be mapped to a neighbour of $v_1$ at maximal distance from the loops (given $v_1$'s evaluation, this will either be unique or there will be two). All other maps of $v_1$ and $x_1$ lead to degenerate cases. 
\begin{proposition}
\label{prop:odds}
Let $\mathcal{C}$ be an odd $m$-cycle which contains an induced $\mathcal{P}_{11100}$ (or is $\mathcal{C}_{0^21^3}$). Then QCSP$(\mathcal{C})$ is NP-hard.
\end{proposition}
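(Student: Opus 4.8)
The plan is to reduce from the retraction problem $\Ret(\mathcal{C})$, which is NP-hard for these cycles by \cite{pseudoforests}, following the template described just before the statement: from an instance $(\mathcal{G},\mathcal{C})$ of $\Ret(\mathcal{C})$, in which the distinguished copy of $\mathcal{C}$ sits on vertices I shall call $v_1,\ldots,v_m$, I would build a positive Horn sentence $\Phi$ whose graph contains this copy and whose quantification drives $v_1,\ldots,v_m$ to map automorphically onto $\mathcal{C}$. The quantifier prefix I have in mind is $\forall v_1\, \forall x_1\, \exists v_2\cdots\exists v_m\, \exists(\text{remaining vertices of }\mathcal{G})$, with the whole existential block after the two universals so that it may react to them. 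Here $x_1$ is a fresh variable adjacent to $v_1$; its role, like that of $v_1$, is to force the rigid configuration through universal quantification rather than letting the existential player dodge it.

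The matrix of $\Phi$ asserts three things: (a) the cycle $E(v_1,v_2)\wedge\cdots\wedge E(v_m,v_1)$ together with $\mathrm{Ref}$ on exactly the looped positions of $\mathcal{C}$; (b) the edge $E(v_1,x_1)$, so that in the good evaluation $x_1$ takes the value of the neighbour of $v_1$ furthest from the loops and thereby fixes the direction in which the copy winds; and (c) one atom $E(a,b)$ for every edge of $\mathcal{G}$, the copy-vertices reusing $v_1,\ldots,v_m$ and every other vertex being a fresh existential variable. The intended (good) evaluation sends $v_1$ to a non-looped vertex $p$ at maximal distance from the loops and $x_1$ to a neighbour of $p$ at maximal distance from the loops. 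I would argue that with $v_1,x_1$ so placed the constraints (a)--(b) leave only the once-around winding for $v_2,\ldots,v_m$: the $\mathrm{Ref}$-atoms force the looped positions of the copy into the loop-arcs of $\mathcal{C}$, and the maximal-distance choice of $p$ removes any slack to fold the intervening non-looped stretch, while the value of $x_1$ fixes the direction. Since $\Ret(\mathcal{C})$ is invariant under the automorphisms of $\mathcal{C}$, it is harmless if this pins the copy to an automorphism other than the identity (a reflection): composing with its inverse turns the resulting homomorphism $\mathcal{G}\to\mathcal{C}$ into a genuine retraction, and conversely a retraction supplies a satisfying assignment for the good evaluation.

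For the converse direction of correctness I must check the degenerate evaluations, in which $v_1$ or $x_1$ is sent anywhere else: here I only need the copy, and with it all of $\mathcal{G}$, to map homomorphically into $\mathcal{C}$, so that no bad universal choice can spuriously falsify $\Phi$. The intended mechanism is absorption into a run of loops: because $\mathcal{C}$ contains an induced $\mathcal{P}_{11100}$ it has at least three consecutive loops, which provide room to collapse any homomorphic image of the cycle-plus-gadget once $v_1,x_1$ have left their intended positions. I expect the main obstacle to be exactly the tension between these two requirements: proving that the good evaluation is genuinely \emph{rigid}, so that it forces the automorphism and hence the retraction, while every other evaluation is genuinely \emph{flexible}, so that it always extends to a homomorphism. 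The parities of the non-looped and looped arcs make the maximal-distance vertex $p$, and its favoured neighbour, either unique or one of a symmetric pair, and the forcing must be shown to survive this twofold ambiguity. Finally, the base case $\mathcal{C}_{0^21^3}$ carries no induced $\mathcal{P}_{11100}$ (five vertices cannot induce a path inside a five-cycle), which is why it is named separately; there the non-looped arc has length two and the three consecutive loops still supply the absorption, but the symmetric pair of candidate positions for $p$ and for $x_1$ must be handled by hand.
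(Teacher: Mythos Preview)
Your proposal has a fatal flaw: you place both $v_1$ and $x_1$ under universal quantification and then include the atom $E(v_1,x_1)$ in the matrix. Adversary simply evaluates $v_1$ and $x_1$ to two non-adjacent vertices of $\mathcal{C}$ (or to the same non-looped vertex) and the sentence is false outright --- no amount of ``absorption into loops'' by the existential block can rescue a failed atom between two already-evaluated universal variables. So the reduction produces a sentence that is never true on $\mathcal{C}$, regardless of the retraction instance.

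The paper's construction avoids exactly this trap, and in two ways. First, there is \emph{no} edge $E(v_1,x_1)$: the informal phrase ``$x_1$ mapped to a neighbour of $v_1$'' in the preamble describes the intended good evaluation, not an atom of $\Phi$. Instead, $x_1$ begins its own path $x_1,\ldots,x_{(m+1)/2}$ with the same loop pattern as the $v$-path, and the two paths are forced to share their far endpoint via $x_{(m+1)/2}=v_{(m+1)/2}$; the penultimate $x$-variable is then identified with $v_{(m+3)/2}$, so the $x$-path supplies the ``other half'' of the cycle. Second, the quantifier prefix is genuinely interleaved, $\forall v_1\,\exists v_2,\ldots,v_{(m-1)/2}\,\forall x_1\,\exists\ldots$, not $\forall v_1\forall x_1\exists\ldots$. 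This ordering is essential: Prover commits to one half-path before Adversary plays $x_1$, and it is precisely this commitment that lets the second universal pin down the winding in the good case while leaving every degenerate choice of $(v_1,x_1)$ extendable. Your block $\forall\forall\exists\ldots$ with the explicit cycle atoms and $\mathrm{Ref}$ pattern cannot achieve both rigidity and flexibility simultaneously, and with the edge $E(v_1,x_1)$ present it cannot achieve either.
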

\begin{proposition}
\label{prop:evens}
Let $\mathcal{C}$ be an even $m$-cycle which contains an induced $\mathcal{P}_{11100}$. Then QCSP$(\mathcal{C})$ is NP-hard.
\end{proposition}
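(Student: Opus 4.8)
The plan is to reduce from the retraction problem $\Ret(\mathcal{C})$, which is NP-hard for such cycles by \cite{pseudoforests}, following the template set up in the remark before Proposition~\ref{prop:odds} and reusing the gadgetry of Proposition~\ref{prop:missing}. Given an instance $\mathcal{G}$ of $\Ret(\mathcal{C})$ (a graph carrying a designated induced copy of $\mathcal{C}$), I would build a positive Horn sentence $\Psi$ as follows. Introduce variables $v_1,\ldots,v_m$ with the atoms $[v_1,v_2,\ldots,v_m,v_1]$ closing them into an $m$-cycle and $\mathrm{Ref}(\cdot)$ atoms on exactly those $v_i$ whose position is looped in $\mathcal{C}$, so that $\{v_1,\ldots,v_m\}$ carries an isomorphic copy of $\mathcal{C}$; identify this copy with the designated copy inside $\mathcal{G}$ and conjoin the canonical query of all of $\mathcal{G}$, existentially quantifying every vertex of $\mathcal{G}$ outside the copy. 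Finally prefix $\forall v_1\,\forall x_1$, where $x_1$ is an auxiliary universal variable forced by a selector path $[x_1 \Rightarrow \ldots]$ to be a neighbour of $v_1$ at a prescribed distance from the loops; all remaining variables, including $v_2,\ldots,v_m$, are existential.

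The intended branch is the one in which Adversary plays $v_1$ to a non-looped vertex at maximal distance from the loops and $x_1$ to the neighbour of $v_1$ that is itself farthest from the loops. Here the key observation is a displacement argument: any homomorphism of the variable $m$-cycle into $\mathcal{C}$ returns to its start, so its net winding is $0$ or $\pm m$, and winding $\pm m$ forces a full wrap, i.e. an automorphism. The induced $\mathcal{P}_{11100}$ ensures $v_1$ and the edge $v_1x_1$ lie too far from any loop to turn the walk around locally, while the $\mathrm{Ref}$ atoms forbid collapsing the cycle onto a non-looped edge; together with the pinned values this excludes winding $0$, so $v_1,\ldots,v_m$ must map automorphically onto $\mathcal{C}$. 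Prover is then forced to extend the identity on the copy of $\mathcal{C}$ to a homomorphism of all of $\mathcal{G}$, i.e. to solve $\Ret(\mathcal{C})$; this yields soundness, since a true $\Psi$ wins this branch in particular.

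For completeness I would show that a retraction $r:\mathcal{G}\to\mathcal{C}$ lets Prover win every branch: in the intended branch she plays the automorphic copy on $v_1,\ldots,v_m$ and then $r$ on the rest of $\mathcal{G}$, and in every \emph{degenerate} branch, whenever $v_1$ or $x_1$ is played near to or onto a loop, she instead folds the variable-cycle into the reflexive sub-path $\mathcal{P}_{111}$ supplied by the induced $\mathcal{P}_{11100}$, so that all of $\{v_1,\ldots,v_m\}$ and with it all of $\mathcal{G}$ collapses into a region where every vertex carries a loop and the constraints hold trivially, independently of $\mathcal{G}$. The parity of $m$ enters only here: because an even cycle is bipartite, the variable-cycle additionally admits winding-$0$ folds onto a single edge that are unavailable in the odd case, and there may be \emph{two} maximal-distance vertices and two admissible wrapping directions; the universal $x_1$ and the selector length (chosen as in the $d$ even/odd split of Proposition~\ref{prop:missing}) are exactly what let Prover cope with both directions and close up the folded walk with the correct even parity.

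The main obstacle is the degenerate-case analysis, and specifically its even-cycle twist. One must verify simultaneously that (a) in the intended branch the extra bipartite, winding-$0$ folds peculiar to even cycles are genuinely ruled out by the $\mathrm{Ref}$ atoms together with the pinning of $v_1$ and $x_1$, so that the branch really forces an automorphism and hence tests the retraction; and (b) in \emph{every} other play of $v_1,x_1$ there is a fold into $\mathcal{P}_{111}$ that both closes up correctly around the even cycle and collapses the attached $\mathcal{G}$, so that no degenerate branch can spuriously falsify $\Psi$. Checking that three consecutive loops always suffice to absorb the fold for all such plays, and that the two-fold directional ambiguity is fully covered by the universal $x_1$, is where the care is needed; the forcing of the automorphism and the reduction bookkeeping are then routine adaptations of Proposition~\ref{prop:missing}.
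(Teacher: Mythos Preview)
Your high-level plan---reduce from $\Ret(\mathcal{C})$, use two universals $v_1,x_1$, and separate an intended branch (which forces $v_1,\ldots,v_m$ to sit automorphically on $\mathcal{C}$) from degenerate branches---is exactly the paper's template. The gap is in the role of $x_1$. You say $x_1$ is ``forced by a selector path $[x_1\Rightarrow\ldots]$ to be a neighbour of $v_1$'' and later invoke ``the edge $v_1x_1$'', but $x_1$ is universal: a literal atom $E(v_1,x_1)$ between two outermost universals lets Adversary falsify $\Psi$ by playing a non-adjacent pair, regardless of $\mathcal{G}$; and a selector path hanging off $x_1$ but not tied to any $v_i$ does nothing to constrain the $v$-cycle. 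Yet on an even cycle, pinning $v_1$ alone is genuinely not enough---e.g.\ in $\mathcal{C}_{0^21^4}$ with $v_1\mapsto 1$, the closed variable cycle with all its $\mathrm{Ref}$ atoms still folds via $v_2,\ldots,v_6\mapsto 6,5,6,5,6$---so winding $0$ is \emph{not} excluded by your displacement argument without a working link from $x_1$ back into the $v_i$. (Also, Proposition~\ref{prop:missing} is a QNAE3SAT reduction; its $\forall$-selector machinery is not the device used for these $\Ret$ reductions.)

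The paper's even-cycle construction supplies precisely the missing link, and---as the paper itself remarks---does so by a different device than in the odd case. It does \emph{not} lay down the closed cycle $v_1{-}\cdots{-}v_m{-}v_1$ with all $\mathrm{Ref}$ atoms. Instead it interleaves the quantifiers, $\forall v_1\,\exists v_2,\ldots,v_k\ \forall x_1\,\exists x_2,\ldots,x_k\,\ldots$ with $k=\lceil\frac{m-d+2}{2}\rceil$, and runs two equal-length paths from $v_1$ and from $x_1$ to looped endpoints $v_k$ and $x_k$; the decisive step is then to \emph{identify} $x_k$ with a specific later $v$-variable, $v_{k+d-1}$, so that Adversary's play of $x_1$ genuinely pins part of the $v$-sequence. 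In the intended branch ($v_1,x_1$ on the two adjacent non-loops farthest from the loops) this forces $v_k$ and $v_{k+d-1}$ onto the two extreme loops of the $1^e$-block, after which the remaining $v$'s are determined around the cycle; in every other branch Prover uses the slack in the looped region. This identification (and the interleaved quantifier order that makes it sound) is the idea your sketch is missing.
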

\noindent We note that the previous two propositions do not quite use the same techniques as one another. All cases of Proposition~\ref{prop:missing} involving more than one non-loop are weakly subsumed by Propositions~\ref{prop:odds} and \ref{prop:evens} in the sense that Pspace-completeness only becomes NP-hardness.

It is interesting to note that the Proposition~\ref{prop:evens} breaks down on even cycles with two consecutive loops only. It is no longer possible to ensure to encircle the cycle. For these cases we will need yet another specialised construction.
\begin{proposition}
\label{prop:evens2}
For $m \geq 6$, let $\mathcal{C}$ be an even $m$-cycle which contains only two consecutive loops. Then QCSP$(\mathcal{C})$ is NP-hard.
\end{proposition}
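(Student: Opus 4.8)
The plan is to reduce from the retraction problem $\Ret(\mathcal{C})$, where up to isomorphism $\mathcal{C}=\mathcal{C}_{110^{m-2}}$ is the even $m$-cycle carrying a single block of two consecutive loops; this problem is NP-hard by \cite{pseudoforests}, and, as noted before Proposition~\ref{prop:odds}, it is the natural source for these NP-hardness reductions. Following that template, from a retraction instance $\mathcal{G}$ with a distinguished induced copy of $\mathcal{C}$ I would build a positive Horn sentence $\Psi$ whose graph $\mathcal{G}_\Psi$ contains variables $v_1,\ldots,v_m$ inducing a copy of $\mathcal{C}$, together with existentially quantified variables encoding the rest of $\mathcal{G}$ relative to this copy. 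Everything would be quantified existentially except for a small set of universal variables whose sole purpose is to force $v_1,\ldots,v_m$ to map \emph{automorphically} onto $\mathcal{C}$. In the non-degenerate evaluation this makes a satisfying assignment of the existential part exactly a homomorphism $\mathcal{G}\to\mathcal{C}$ restricting to an automorphism on the copy, hence a retraction after composing with its inverse, so that $\Psi$ should be true precisely when $\mathcal{G}$ retracts onto $\mathcal{C}$.

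The anchoring is where the even two-loop case needs care, and this is the part I expect to replace the generic gadget of Proposition~\ref{prop:evens}. Because $m$ is even and the loop block $\{1,2\}$ has length two, the vertices of $\mathcal{C}$ at maximal distance $\tfrac{m}{2}-1$ from the loops are not unique: they form the antipodal edge $\{\tfrac{m}{2}+1,\tfrac{m}{2}+2\}$ (here $m\geq 6$ guarantees these lie strictly beyond the loops). I would therefore take two \emph{adjacent} universal variables $v_1,x_1$ and arrange matters so that whenever $v_1$ is mapped to a maximal-distance vertex its unique maximal-distance neighbour is the opposite antipodal vertex, whence $(v_1,x_1)$ pins an oriented copy of the antipodal edge. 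From this oriented edge I would propagate along the two arcs of equal length $\tfrac{m}{2}-1$ towards the loop block, using the pattern $\mathcal{P}_{10^{m-2}1}$ (its looped ends matching the two loops) and $\forall$-selectors adapted to the arc length, with the loop edge $E(1,2)$ linking the two arc endpoints and serving as a synchroniser.

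The main obstacle is precisely the forcing of this \emph{encircling}. With only two loops $\mathcal{C}$ admits the reflection automorphism that fixes the loop block and swaps the two arcs, and, more seriously, the two looped endpoint-variables may both be sent to a single loop vertex (the loop $E(1,1)$ absorbs the collapse), so Prover could fold an arc back on itself or collapse it rather than winding once around — exactly what the Proposition~\ref{prop:evens} gadget fails to prevent here. The specialised idea is that the two arcs are driven \emph{simultaneously} from the pinned oriented antipodal edge and can close consistently only by reaching the two \emph{distinct} loop vertices, which a delicate length-and-parity bookkeeping (both arcs have length $\tfrac{m}{2}-1$, and $m$ is even) is meant to force. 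Two verifications then remain, both expected to be routine but essential: in the non-degenerate play the forced automorphism reduces satisfiability of $\Psi$ to solvability of $\Ret(\mathcal{C})$; and for every degenerate play of the universal pair $(v_1,x_1)$ — any play not pinning an oriented antipodal edge — Prover wins outright, typically by collapsing all cycle variables into the looped block. Combining these yields that QCSP$(\mathcal{C})$ is true iff $\Ret(\mathcal{C})$ is a yes-instance, establishing NP-hardness.
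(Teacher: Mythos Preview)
Your plan is the natural first attempt, but it runs into precisely the obstruction the paper isolates, and your ``length-and-parity bookkeeping'' does not survive it. Concretely: to force encirclement you must tie the second arc to the first by an equality such as $v_{\frac{m}{2}+2}=x_{\frac{m}{2}-1}$ (so that the two paths emanating from the pinned antipodal edge close up at \emph{distinct} loops rather than folding onto one). With $v_1$ on a maximal-distance vertex (say vertex~$1$, loops at $\tfrac{m}{2},\tfrac{m}{2}+1$), this equality is exactly what pins the far side. But now let Adversary play $x_1$ not at the other antipode $m$ but at $2$ (or symmetrically $m-1$): then $v_{\frac{m}{2}}$ is forced to $\tfrac{m}{2}$, while from $x_1=2$ the value $x_{\frac{m}{2}-1}$ cannot be made adjacent to any legal looped $v_{\frac{m}{2}+1}$ consistently with the chain and the loop atoms --- the equality $v_{\frac{m}{2}+2}=x_{\frac{m}{2}-1}$ becomes unsatisfiable. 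So the very constraint that forces encirclement in the intended play gives Adversary a free win in this degenerate play; your claim that ``Prover wins outright, typically by collapsing all cycle variables into the looped block'' is false here, because the collapse is blocked by the cross-equality. This is not a detail that more careful bookkeeping repairs: any purely universal second anchor $x_1$ that is strong enough to prevent folding at the loop pair is also strong enough to be weaponised by Adversary at the neighbours $2,m-1$ of the antipodes.

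The paper's proof sidesteps this with two genuinely new ingredients you are missing. First, it runs \emph{two} parallel copies $v_1,\ldots,v_m$ and $w_1,\ldots,w_m$ of the cycle variables, with separate universals $v_1,w_1$, and only asks that \emph{at least one} of the two copies be automorphic. Second, the third anchor is not a universal but a bespoke ``weak universal'' $\heartsuit z\,\phi(z):=\forall z'\,\exists z\ E(z',z)\wedge\phi(z)$, whose semantics on $\mathcal{C}$ give exactly the leverage needed: if $\phi(2)$ and $\phi(m-1)$ fail (as they must, by the analysis above), then $\phi(1)$ and $\phi(m)$ are forced to hold. The cross-equalities $v_{\frac{m}{2}+2}=x_{\frac{m}{2}-1}$ and $w_{\frac{m}{2}+2}=y_{\frac{m}{2}-1}$ are then placed under $\heartsuit z$ rather than a true $\forall$, turning the unsatisfiability at $z\in\{2,m-1\}$ from a bug into the mechanism that pins $z$ to $\{1,m\}$ and hence forces one of the two cycles to close. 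Without the $\heartsuit$-quantifier (or an equivalent device) and the doubling, the reduction does not go through.
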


\section{Cycles in which the loops induce a disconnected graph}
\label{sec:disconnected}

Let $D_\mathcal{C}:=\{\{\lceil \frac{d_1}{2} \rceil ,\ldots,\lceil \frac{d_m}{2} \rceil\}\}$ be the multiset (of two or more elements), where $d_1,\ldots,d_m$ are the maximal non-looped induced sections (paths) of a cycle $\mathcal{C}$ in which the loops induce a disconnected graph. \mbox{E.g.} a single non-loop between two loops contributes a value $\lceil 1/2 \rceil = 1$ to $D_\mathcal{C}$. We need to split into three cases. 
\begin{proposition}
\label{prop:discon-unique-max}
Let $\mathcal{C}$ be a partially reflexive $m$-cycle in which the loops induce a disconnected graph. If $D_\mathcal{C}$ contains a unique maximal element $\lceil \frac{d}{2} \rceil$, then QCSP$(\mathcal{C})$ is Pspace-complete.
\end{proposition}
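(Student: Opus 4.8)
Pspace membership is immediate, since QCSP$(\mathcal{C})$ is always in Pspace; the content is Pspace-hardness. The plan is to reduce from quantified not-all-equal 3-satisfiability, exactly in the style of Proposition~\ref{prop:P101} (with the standing assumption that no clause is purely universal), taking as the \emph{pattern} the path $\mathcal{P}_{10^d1}$ cut out by the \emph{unique} longest non-looped section of $\mathcal{C}$ together with its two flanking looped vertices. The two looped ends of this pattern play the roles of $\top$ and $\bot$; from the pattern one assembles the variable gadgets (top and bottom copies of the pattern, braced by horizontal edges) and the clause diamonds precisely as in Figure~\ref{fig:gadgets}, together with an appropriately-scaled $\forall$-selector path, here of length about $\lceil \frac{d}{2}\rceil$ (as in Propositions~\ref{prop:P101} and~\ref{prop:missing}), used to drive the universal literals between $\top$ and $\bot$.

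The heart of the construction, as flagged in Proposition~\ref{prop:missing}, is forcing $v_\top$ and $v_\bot$ to evaluate to the two endpoints of the longest gap. I would reuse the frame-forcing template of Proposition~\ref{prop:missing}: introduce universal variables $x_1$ and $y_1$ and run paths $[x_1 \Rightarrow \cdots, v_\top]$ and $[y_1 \Rightarrow \cdots, v_\bot]$ of length $\lceil \frac{d}{2}\rceil$ out of them, so that whenever $x_1$ (respectively $y_1$) lands on a non-looped vertex at maximal distance $\lceil \frac{d}{2}\rceil$ from the loops, $v_\top$ (respectively $v_\bot$) is pinned to an endpoint loop of the longest gap, while a connecting walk $[v_\top, z_1,\ldots,z_d, v_\bot]$ through the gap fixes their separation. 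This is exactly where the hypothesis is used: because $\lceil \frac{d}{2}\rceil$ is the \emph{unique} maximal element of $D_\mathcal{C}$, a vertex at distance $\lceil \frac{d}{2}\rceil$ from the loops can only lie inside the longest section, so the intended frame is pinned down (up to the reflection of the cycle and, when $d$ is even, the two central vertices).

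On the intended branch---the universal frame variables sitting at deepest vertices of the longest gap---the gadgets then faithfully simulate the QNAESAT instance: existential literals are witnessed by the $\top/\bot$ choice inside the pattern, the diamonds enforce not-all-equal, and the $\forall$-selector converts a universal choice of vertex into a universal choice of truth value. On every other (degenerate) evaluation of the frame and selector variables one checks that the pattern, the $\forall$-selector and the clause diamonds each admit a homomorphism into $\mathcal{C}$ independently of the formula's Boolean structure (for instance, a non-looped pattern vertex may be absorbed onto a loop), so those branches are vacuously satisfied and do not disturb the equivalence. Combining these, $\Psi$ holds on $\mathcal{C}$ iff $\Phi \in \mathrm{QNAE3SAT}$.

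The main obstacle will be the second half of the case analysis: verifying that for \emph{every} degenerate placement of the universal frame and selector variables the whole instance remains satisfiable, while on the single intended placement the pattern is rigid enough to transmit $\top/\bot$ without collapse. Controlling this simultaneously---rigidity on the good branch, floppiness on the bad ones---is the delicate point, and is where the exact lengths of the pattern and the $\forall$-selector (and the parities forced by $d$ even versus $d$ odd) must be tuned; the uniqueness of the maximal element of $D_\mathcal{C}$ is precisely what makes the good branch unambiguous in the first place.
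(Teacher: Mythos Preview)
Your plan is a direct reduction from QNAE3SAT, rebuilding the pattern/$\forall$-selector machinery of Propositions~\ref{prop:P101} and~\ref{prop:missing} inside $\mathcal{C}$. The paper takes a different and more modular route. It cuts $\mathcal{C}$ open by deleting the middle vertex (or middle two vertices) of the unique longest non-looped section, obtaining a partially reflexive \emph{path} $\mathcal{P}$; it then observes (by inspection of the proofs in \cite{QCSPforests}) that QCSP$(\mathcal{P})$ is already Pspace-complete even with universal variables confined to a subset of $P$ containing its two end-most loops, and reduces from that restricted problem. The reduction simply relativises both quantifier types: each universal is replaced by a $\forall$-selector $\mathcal{P}_{0^{\lceil d/2\rceil}1}$, and each existential variable has a path of length $\lceil d/2\rceil-1$ terminating in a loop hung off it, forcing the existential to land at distance at most $\lceil d/2\rceil-1$ from some loop of $\mathcal{C}$ --- that is, into $P\subseteq C$, since by uniqueness of the maximal element of $D_{\mathcal{C}}$ the only vertices of $\mathcal{C}$ farther than $\lceil d/2\rceil-1$ from every loop are precisely the deleted middle vertices.

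What your route would buy is self-containment (no appeal to the forests classification). What the paper's route buys is that the degenerate-branch analysis you correctly flag as the main obstacle is entirely outsourced to \cite{QCSPforests}: there is no frame-forcing of $v_\top,v_\bot$, no connecting walk $[v_\top,z_1,\ldots,z_d,v_\bot]$, and no parity tuning --- just two short relativisation gadgets and a citation. Your sketch is plausible in outline, but be aware that the frame-forcing of Proposition~\ref{prop:missing} leaned on a single long block of $e>d+2$ consecutive loops (this is exactly where its numerical hypothesis entered), whereas here the looped regions of $\mathcal{C}$ may be short and scattered; the reflexive tails $\mathrm{Ref}(x_{\ldots})$ used there are therefore not obviously available, and the degenerate cases multiply accordingly.
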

\begin{proposition}
\label{prop:discon-even-gaps}
Let $\mathcal{C}$ be a partially reflexive $m$-cycle in which the loops induce a disconnected graph. If $D_\mathcal{C}$ contains only one value, then QCSP$(\mathcal{C})$ is Pspace-complete.
\end{proposition}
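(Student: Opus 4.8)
The plan is to adapt the QNAE3SAT reduction of Propositions~\ref{prop:P101} and~\ref{prop:missing} to the ``necklace'' shape of $\mathcal{C}$. Let $r$ denote the single value appearing in $D_\mathcal{C}$, so that every maximal non-looped section of $\mathcal{C}$ has length $2r-1$ or $2r$, and in each such gap the deepest non-looped vertices sit at distance exactly $r$ from the nearest loops. Fix one gap $G$, bounded by loop-blobs $L_\top$ and $L_\bot$; these will host the rails $\top$ and $\bot$, and an induced path of the form $\mathcal{P}_{10^{d}1}$ (with $d\in\{2r-1,2r\}$) spanning $G$ will be the \emph{pattern}, exactly as $\mathcal{P}_{10^d1}$ was in Proposition~\ref{prop:missing}. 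Each QNAE3SAT variable is encoded by which rail its gadget adheres to, each clause by a pattern-built diamond braced so as to forbid the all-equal colourings, and the \emph{$\forall$-selector} is taken to be a path $\mathcal{P}_{0^{\ast}1}$ whose length is tuned to the common depth $r$, running from a universal vertex-variable to a truth node of a variable gadget.

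I would assemble $\mathcal{G}_\Phi$ and the prefix $\Psi$ exactly as in Proposition~\ref{prop:missing}, copying the QNAE3SAT quantifier order onto the outside, and forcing $v_\top,v_\bot$ into $L_\top,L_\bot$ by the same universal ``$[x_1 \Rightarrow \ldots, v_\top]$'' mechanism, now tuned so that a deepest play of the universal selects a rail. Completeness is then routine: a winning strategy for the Satisfier in the QNAE3SAT game converts into a Prover strategy for the evaluation game on $\Psi$. When Adversary plays a universal vertex-variable at a deepest vertex of some gap, Prover reads off a Boolean value through the selector and answers as the Satisfier would; since every gap has depth $r$, the pattern and selector embed into $G$ with slack, so each existential response is realizable, while degenerate Adversary plays are absorbed by the extra homomorphisms the gadgets admit.

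Soundness is where the hypothesis that \emph{$D_\mathcal{C}$ contains only one value} does the real work, and it is the step I expect to be the main obstacle. As the preamble on $\Ret$-style reductions warns, one must control the \emph{degenerate} maps; here the specific danger is that, the loops of $\mathcal{C}$ being disconnected and all gaps being interchangeable, a homomorphism need not stay inside $G$ but could route a pattern through a rival (isomorphic) gap or slide a diamond partway across a loop-blob. The crux is to show that the tuned $\forall$-selector forces precisely the intended binary alternative: a universal play at a deepest vertex pins the attached truth node to one rail-value, which is exactly the meaningful choice, whereas every shallower universal play leaves the selector and its gadget homomorphically satisfiable and hence harmless. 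The uniformity of the depth $r$ is what makes this forcing identical across the symmetric gaps, so that each ``good'' Adversary play carries the same not-all-equal semantics and each ``bad'' one is neutralised; I would then extract a genuine not-all-equal assignment from any Prover win.

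Concretely, the obstacle reduces to a local case analysis, carried out gap-by-gap, checking that the braces of each clause-diamond still exclude the all-equal colourings under all the extra homomorphisms made available by the disconnected loops, and that chained selectors cannot telescope two distinct universal choices into one. I expect this to go through by the argument of Proposition~\ref{prop:missing}, exploiting that $r$ is common to every gap. The very smallest values of $r$ (and, as in the remark after Proposition~\ref{prop:missing}, the boundary cases where the selector becomes too short) may have to be verified by hand, and it is exactly the failure of uniform depth that necessitates the separate treatment of the remaining disconnected case (several gaps of maximal depth but not all gaps equal), complementing Proposition~\ref{prop:discon-unique-max}.
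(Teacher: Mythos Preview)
Your overall plan---pattern $\mathcal{P}_{10^{g^+}1}$, $\forall$-selector $\mathcal{P}_{0^{r}1}$, QNAE3SAT gadgetry as in Propositions~\ref{prop:P101} and~\ref{prop:missing}---is the right one and matches the paper. The gap is precisely in the step you pass over: ``forcing $v_\top,v_\bot$ into $L_\top,L_\bot$ by the same universal `$[x_1\Rightarrow\ldots,v_\top]$' mechanism''. That mechanism cannot be reused here, and this is the one genuinely new idea the proof requires.

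The difficulty is that you cannot \emph{fix one gap $G$}. In $\mathcal{C}_{0^d1^e}$ there is a single gap, so the path $[x_1\Rightarrow\cdots,v_\top]$ of Proposition~\ref{prop:missing} lands $v_\top$ on one of the two uniquely determined boundary blobs. Here there are $k\geq 2$ gaps, all of the same depth $r$, and Adversary chooses which gap $x_1$ sits in. A deepest play of $x_1$ in gap $G'$ drives $v_\top$ to a boundary of $G'$, not of your intended $G$; a deepest play of $y_1$ in yet another gap $G''$ drives $v_\bot$ to a boundary of $G''$. Now $v_\top$ and $v_\bot$ may lie on non-adjacent loop-blobs, and the pattern path $[v_\top,z_1,\ldots,z_{g^+},v_\bot]$ (and the clause diamonds hanging between the rails) need not be realisable at all---so completeness fails. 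Conversely, if you weaken the link between the rails to accommodate this, Adversary can play $x_1,y_1$ in the \emph{same} gap and let Prover collapse both rails to one blob, destroying soundness. No amount of ``tuning'' the length of a single $0^\ast 1^\ast$ tail fixes this, because a reflexive suffix only moves you within the blob you have already reached; it cannot select a blob against the symmetry of the $k$ identical gaps.

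The paper's remedy is a new gadget between the initial universal and $v_\top$ (resp.\ $v_\bot$): after the selector $0^{r}1$ lands on \emph{some} loop, append an alternating word $(1^{n}0^{2r})^{j}1^{n}$, where $n=|C|$ and $j$ is roughly $k/4$. Each $1^{n}$ block lets the image roam freely within a loop-blob; each $0^{2r}$ block lets it hop across one gap. Choosing $j=\lceil(\lfloor k/2\rfloor-1)/2\rceil$ for $v_\top$ and $j=\lfloor(\lfloor k/2\rfloor-1)/2\rfloor$ for $v_\bot$, one obtains that when Adversary plays the two universals at antipodal gaps (with respect to the $k$ gaps), $v_\top$ and $v_\bot$ are forced onto \emph{adjacent} loop-blobs separated by a gap of length exactly $g^+$; for every other Adversary play the alternating word provides enough slack that the whole picture still maps homomorphically. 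This hop-counting device is the missing ingredient in your proposal; once it is in place, the rest of your argument (and your case analysis for the clause braces) goes through as written.
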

\begin{corollary}
\label{cor:disconnected}
Let $\mathcal{C}$ be a partially reflexive $m$-cycle in which the loops induce a disconnected graph. Then QCSP$(\mathcal{C})$ is Pspace-complete.
\end{corollary}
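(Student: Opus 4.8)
The problem QCSP$(\mathcal{C})$ always lies in \Pspace, so it suffices to establish \Pspace-hardness in every case. Since the loops of $\mathcal{C}$ induce a disconnected graph, there are at least two maximal non-looped sections, and hence $D_\mathcal{C}$ is a multiset with at least two elements. I would distinguish three mutually exclusive and jointly exhaustive cases according to the shape of $D_\mathcal{C}$: either (1.) $D_\mathcal{C}$ has a unique maximal element $\lceil \frac{d}{2} \rceil$; or (2.) $D_\mathcal{C}$ consists of a single repeated value; or (3.) $D_\mathcal{C}$ takes at least two distinct values while its maximum $\lceil \frac{d}{2} \rceil$ is attained by at least two of its elements. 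These are exhaustive: if $D_\mathcal{C}$ has more than one value then its maximum is attained once (case 1.) or at least twice (case 3.), and the remaining possibility of a single value is case 2.

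Case 1. is exactly Proposition~\ref{prop:discon-unique-max} and case 2. is exactly Proposition~\ref{prop:discon-even-gaps}, each of which already delivers \Pspace-completeness through its stated reduction. The substance of the corollary therefore lies in case 3., and the plan is to adapt the reduction used for Proposition~\ref{prop:discon-even-gaps}. The guiding observation is that the universally quantified variable anchoring the encirclement is forced, in its sole non-degenerate play, onto a non-looped vertex at \emph{globally} maximal distance from the loops; such a vertex can only be the midpoint of a gap realising the maximal value $\lceil \frac{d}{2} \rceil$. Consequently every forcing play lives inside the family of maximal gaps, while any play landing in a strictly shorter gap (or off a midpoint) is automatically degenerate.

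Among themselves the maximal gaps are mutually congruent, all of length $d$, so the pattern $\mathcal{P}_{10^{d}1}$ and the $\forall$-selector employed in Proposition~\ref{prop:discon-even-gaps} apply to them unchanged: the symmetry exploited there is a symmetry between two gaps of equal maximal length, and by the hypothesis of case 3. at least two such gaps are present. I would run that construction using one maximal gap to anchor the encirclement and a second maximal gap to house the opposite anchor, obtaining from a QNAE3SAT instance a positive Horn sentence $\Psi$ that faithfully simulates the source whenever the anchoring universal variables are played non-degenerately, and that must still be satisfiable under every degenerate play. Invoking the \Pspace upper bound then upgrades \Pspace-hardness to \Pspace-completeness.

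The main obstacle is precisely the degenerate bookkeeping for case 3.: one must verify that the presence of strictly shorter gaps introduces no genuinely new \textbf{problematic} play of the universal variables. When the anchoring variable is played in a short gap, the induced map of $v_1,\ldots,v_m$ is no longer an automorphism of $\mathcal{C}$ but only a homomorphism into it, and one must check --- exactly as in the degenerate analyses underlying Propositions~\ref{prop:discon-unique-max} and \ref{prop:discon-even-gaps} --- that the clause and variable gadgets can still be mapped homomorphically into $\mathcal{C}$. Since every short gap embeds into a maximal one, these degenerate maps should factor through degenerate cases already treated, so no essentially new situation arises; confirming this factoring in full is the only real work.
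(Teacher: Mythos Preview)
Your three-case split matches the paper's, and your invocation of Propositions~\ref{prop:discon-unique-max} and~\ref{prop:discon-even-gaps} for cases~1 and~2 is exactly what the paper does. The divergence is in case~3, where you have located the wrong obstacle.

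You treat the presence of strictly shorter gaps as a purely \emph{degenerate}-case issue: what happens when the anchoring universal is played into a short gap. But the real difficulty lies in the \emph{non-degenerate} case. In Proposition~\ref{prop:discon-even-gaps} the path joining each universal anchor to $v_\top$ (resp.\ $v_\bot$) is the alternating motif $1^n 0^{2f}$, repeated a number of times calibrated to $k$, the number of maximal gaps. That motif is designed so that each repetition traverses exactly one (maximal) gap. In case~3 the cycle also contains strictly shorter gaps interspersed between the maximal ones, so the motif as written either runs out of alternations before reaching the intended position, or its $0^{2f}$-segments may be consumed crossing short gaps rather than maximal ones, destroying the forcing. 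Your sentence ``run that construction using one maximal gap to anchor the encirclement and a second maximal gap to house the opposite anchor'' glosses over precisely this: the path between those two anchors has to pass through the short gaps, and the unmodified motif does not handle that.

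The paper's fix is not a degenerate-case analysis at all but a concrete amendment to the motif: replace $1^n 0^{2f}$ by $1^n (0^{2f-2} 1^n)^n 0^{2f}$. The inserted block $(0^{2f-2} 1^n)^n$ can absorb up to $n$ short gaps (each of length at most $2f-2$) ``for free'' before the trailing $0^{2f}$ crosses the next maximal gap, restoring the count and the forcing. This is the missing idea in your proposal. (A small separate inaccuracy: the maximal gaps need not all have the same length~$d$; gaps of length $2f{-}1$ and $2f$ both contribute $f$ to $D_\mathcal{C}$, which is why Proposition~\ref{prop:discon-even-gaps} works with $g^+$ rather than a single~$d$.)
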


\section{Classification}
\label{sec:class}

\begin{theorem}
\label{thm:over4}
Let $m=d+e\geq 5$. Then QCSP$(\mathcal{C}_{0^d1^e})$ is in NL if I.) $m$ is odd and $e=1$ or $2$, or II.) $m$ is even and $e=0$ or $1$. Otherwise, QCSP$(\mathcal{C}_{0^d1^e})$ is NP-hard.
\end{theorem}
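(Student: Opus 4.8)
The plan is to prove the two directions as an assembly of the preceding sections, supplying fresh arguments only for NL-membership and for the few boundary instances.

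For NL-membership (cases I and II) I would proceed exactly as in \cite{pseudoforests}: the three relevant families are the irreflexive even cycle $\mathcal{C}_{0^m}$, the one-loop cycle $\mathcal{C}_{0^{m-1}1}$ (both parities) and the two-loop odd cycle $\mathcal{C}_{0^{m-2}1^2}$. For each I would compute its Q-core in the sense of \cite{CP2012} and write down a majority polymorphism on that Q-core; NL-membership then follows from the standard bounded-strict-width consequence of a majority operation, relativised to the quantified setting through the Q-core. For $\mathcal{C}_{0^m}$ the Q-core collapses (as a QCSP template) to an edge $K_2$ carrying the Boolean majority; for the one- and two-loop cycles it is a small reflexive gadget on which a median-type operation is exhibited and the identities $f(x,x,y)=f(x,y,x)=f(y,x,x)=x$ are checked by inspection. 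Because every instance here has $m\geq 5$, none coincides with the exceptional $\mathcal{C}_{0111}$, so the uniform majority argument suffices and no bespoke logspace algorithm is required.

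For NP-hardness I would exhaust the complementary region and route each part to an earlier result. The fully reflexive case $d=0$ (so $e=m\geq 5$) is Proposition~\ref{prop:reflexive}. The odd irreflexive case ($m$ odd, $e=0$) follows because $\mathcal{C}_{0^m}$ is non-bipartite, so CSP$(\mathcal{C}_{0^m})$ is NP-complete by Hell--\Nesetril{} \cite{HellNesetril}, and primitive positive sentences are a special case of positive Horn sentences, giving a trivial reduction from CSP$(\mathcal{C}_{0^m})$ to QCSP$(\mathcal{C}_{0^m})$ via the identity map. Where $d\geq 2$ and $e\geq 3$ with $m\geq 6$, an induced $\mathcal{P}_{11100}$ is present and hardness is Proposition~\ref{prop:odds} ($m$ odd) or Proposition~\ref{prop:evens} ($m$ even). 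The even cycles with exactly two loops ($m$ even, $e=2$, so $m\geq 6$) are Proposition~\ref{prop:evens2}, and the thin cases $d=1$ with $e\geq 5$ are Proposition~\ref{prop:missing}. A short check confirms this partition is exhaustive over the NP-hard region.

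The step I expect to be the main obstacle is the handful of boundary instances where the thresholds of Proposition~\ref{prop:missing} are exactly met and its reduction degenerates (as flagged in the remark following that proposition). For $m\geq 5$ these reduce, after the above bookkeeping, to the two odd five-cycles $\mathcal{C}_{0^21^3}$ and $\mathcal{C}_{01111}$. The former is covered by the clause explicitly appended to Proposition~\ref{prop:odds}. For $\mathcal{C}_{01111}$ (one non-loop, four loops, sitting exactly at $e=d+3$) I would give a tailored reduction from $\Ret(\mathcal{C}_{01111})$, NP-hard by \cite{pseudoforests}, using the standard device that forces the variables $v_1,\ldots,v_5$ to map automorphically onto the cycle while ensuring every degenerate valuation of the universally quantified $v_1$ and $x_1$ still extends homomorphically. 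Verifying that these degenerate extensions always exist, and that the case partition leaves no further gaps, is the delicate part of the argument.
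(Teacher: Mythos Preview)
Your decomposition of the hard side is essentially the paper's: irreflexive odd cycles, the reflexive case via Proposition~\ref{prop:reflexive}, cycles with an induced $\mathcal{P}_{11100}$ via Propositions~\ref{prop:odds}/\ref{prop:evens}, even two-loop cycles via Proposition~\ref{prop:evens2}, and single-non-loop cycles via Proposition~\ref{prop:missing}. Your bookkeeping is in fact a little more careful than the paper's own proof, which simply says ``for cycles with a single non-loop see Proposition~\ref{prop:missing}'' and does not separately flag $\mathcal{C}_{01111}$ even though the threshold $e>d+3$ with $d=1$ excludes $e=4$. So your instinct that $\mathcal{C}_{01111}$ is a boundary case deserving a dedicated argument is sound; just be aware that the paper does not supply one, so your tailored $\Ret$-reduction really would need to be written out in full rather than sketched.

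On the NL side your plan (pass to the Q-core, exhibit a majority) is the right mechanism, and it is what the paper uses in spirit, but your description of the Q-cores is wrong and would not survive scrutiny. The Q-cores here are \emph{not} ``small reflexive gadgets'': for $\mathcal{C}_{0^{2i+1}1}$, $\mathcal{C}_{0^{2i}1}$ and $\mathcal{C}_{0^{2i-1}11}$ the QCSP-equivalent templates are the partially reflexive paths $\mathcal{P}_{0^{i+1}1}$ and $\mathcal{P}_{0^{i}1}$, which have size growing linearly in $m$ and carry exactly one loop. The paper establishes this equivalence concretely by exhibiting surjective homomorphisms from the square of the path onto the cycle (and from the cycle onto the path), appealing to \cite{LICS2008}, and then invokes \cite{QCSPforests} for NL-membership of QCSP on those paths; the majority polymorphism lives on $\mathcal{P}_{0^k1}$, not on some bounded reflexive object. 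If you rewrite your NL argument with these paths as the targets and cite \cite{QCSPforests} for the majority operation on them, you recover exactly the paper's proof.
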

\begin{proof}
Pspace-hardness for irreflexive odd cycles is well-known (see \cite{CiE2006}). Hardness for cycles with disconnected loops follows from Corollary~\ref{cor:disconnected}. Otherwise, for most cycles hardness follows from Propositions~\ref{prop:odds} and \ref{prop:evens}.
For reflexive cycles see Proposition~\ref{prop:reflexive} and for cycles with a single non-loop see Proposition~\ref{prop:missing}. Finally, the outstanding cases of even cycles with two loops are taken care of in Proposition~\ref{prop:evens2}.

Now we address the NL cases. For even cycles with no loops, we are equivalent to QCSP$(\mathcal{K}_2)$ (in NL -- see \cite{CiE2006}). For even cycles $\mathcal{C}_{0^{2i+1}1}$, QCSP$(\mathcal{C}_{0^{2i+1}1})$ is equivalent to QCSP$(\mathcal{P}_{0^{i+1}1})$ (in NL -- see \cite{QCSPforests}). This is because there are surjective homomorphisms from both $(\mathcal{P}_{0^{i+1}1})^2$ to $\mathcal{C}_{0^{2i+1}1}$ and $\mathcal{C}_{0^{2i+1}1}$ to $\mathcal{P}_{0^{i+1}1}$ (see \cite{LICS2008}). For odd cycles, there are surjective homomorphisms from $(\mathcal{P}_{0^{i}1})^2$ to $\mathcal{C}_{0^{2i}1}$ and from $(\mathcal{P}_{0^{i}1})^2$ to $\mathcal{C}_{0^{2i-1}11}$. Thus, QCSP$(\mathcal{C}_{0^{2i}1})$ is equivalent to both QCSP$(\mathcal{C}_{0^{2i-1}11})$ and QCSP$(\mathcal{P}_{0^{i}1})$ and the result follows from \cite{QCSPforests}.
\end{proof}
\begin{theorem}
For $\mathcal{C}$ a partially reflexive cycle, either QCSP$(\mathcal{C})$ is in NL or it is NP-hard.
\end{theorem}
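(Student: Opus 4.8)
The plan is to assemble this theorem as a synthesis of all the preceding results, by a case analysis on the length $m$ of $\mathcal{C}$ and on the combinatorial type of its loop set. The first step is to invoke the rotational isomorphism recorded in the preliminaries (namely $\mathcal{C}_\alpha \cong \mathcal{C}_{\alpha'}$ whenever $\alpha'$ is a cyclic shift of $\alpha$) to normalise $\mathcal{C}$. I then separate the finitely many small cycles $m \leq 4$, to be treated directly, from the cycles of length $m \geq 5$, for which a clean split between connected and disconnected loop sets is available and already resolved in the earlier sections.

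For $m \geq 5$ I distinguish two cases according to the subgraph induced by the loops. If this subgraph is connected (an arc of the cycle, with the two extreme cases being the irreflexive cycle and the fully reflexive cycle), then after rotation $\mathcal{C}$ is isomorphic to some $\mathcal{C}_{0^d1^e}$ with $d+e = m \geq 5$, and Theorem~\ref{thm:over4} is precisely the desired dichotomy: QCSP$(\mathcal{C})$ is in NL in the four listed families and is NP-hard otherwise. If instead the loops induce a disconnected graph, then Corollary~\ref{cor:disconnected} delivers Pspace-completeness, which a fortiori gives NP-hardness. Hence every partially reflexive cycle of length at least $5$ is accounted for and satisfies the stated dichotomy.

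It remains to dispatch $m \in \{1,2,3,4\}$ by enumerating the loop patterns up to rotation, a finite check. The degenerate cases $m \leq 2$ collapse to templates with essentially one orbit of behaviour and lie in NL (indeed in L). For $m=3$ and $m=4$ only a handful of patterns survive: the irreflexive even cycle $\mathcal{C}_{0000}$ is QCSP-equivalent to $\mathcal{K}_2$ and so in NL; the irreflexive odd cycle $\mathcal{C}_{000}=\mathcal{K}_3$ is NP-hard; the fully reflexive $\mathcal{C}_{1111}$ is Pspace-complete by Proposition~\ref{prop:C1111}; cycles whose loops are disconnected (such as $\mathcal{C}_{0101}$) fall to Corollary~\ref{cor:disconnected}; and the remaining connected-loop patterns are settled by the same analysis as the large cycles or by direct argument. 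The one genuinely exceptional entry is $\mathcal{C}_{0111}$, placed in L by Proposition~\ref{prop:C0111}.

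The main obstacle I anticipate is not a single deep step but the completeness and correctness of the small-cycle bookkeeping: one must check that the enumeration of loop patterns up to the rotational isomorphism is both exhaustive and non-overlapping, and in particular that $\mathcal{C}_{0111}$ is correctly isolated as the lone tractable case admitting no majority polymorphism in any QCSP-equivalent template, the phenomenon flagged in the introduction that separates the cycle classification from the forest one. Once this bookkeeping is verified, every case reduces to a direct appeal to Theorem~\ref{thm:over4}, Corollary~\ref{cor:disconnected}, or Propositions~\ref{prop:C1111} and~\ref{prop:C0111}, and the dichotomy follows.
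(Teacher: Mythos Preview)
Your proposal follows essentially the same decomposition as the paper: invoke Theorem~\ref{thm:over4} and Corollary~\ref{cor:disconnected} for $m\geq 5$, then enumerate the finitely many cycles on at most four vertices, with Propositions~\ref{prop:C1111} and~\ref{prop:C0111} handling $\mathcal{C}_{1111}$ and $\mathcal{C}_{0111}$. If anything, your separation of the $m\geq 5$ case into ``connected loop set $\Rightarrow$ Theorem~\ref{thm:over4}'' versus ``disconnected loop set $\Rightarrow$ Corollary~\ref{cor:disconnected}'' is tidier than the paper, which folds the disconnected case into the proof of Theorem~\ref{thm:over4} and then cites only that theorem here; and you explicitly cover $\mathcal{C}_{0101}$, which the paper's short proof omits to mention.

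The one place where you are vaguer than the paper is the mechanism for NL-membership of the small tractable cycles. You write that the remaining connected-loop patterns are ``settled by the same analysis as the large cycles or by direct argument'', but the large-cycle analysis in Theorem~\ref{thm:over4} (surjective homomorphisms to and from partially reflexive paths) does not obviously extend to, e.g., $\mathcal{C}_{111}$ or $\mathcal{C}_{0011}$, which do not fit the parameter ranges there. The paper instead observes that each of $\mathcal{C}_{001}$, $\mathcal{C}_{011}$, $\mathcal{C}_{111}$, $\mathcal{C}_{0000}$, $\mathcal{C}_{0001}$, $\mathcal{C}_{0011}$ admits a majority polymorphism, and then appeals to Chen's collapsibility result together with Dalmau--Krokhin to place the QCSP in NL. Making that explicit is the only concrete piece your outline is missing.
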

\begin{proof}
Owing to Theorem~\ref{thm:over4}, it remains only to consider partially reflexive cycles on four or fewer vertices. 

Firstly, we consider NL-membership. Each of $\mathcal{C}:=$ $\mathcal{C}_{001}$, $\mathcal{C}_{011}$, $\mathcal{C}_{111}$, $\mathcal{C}_{0000}$, $\mathcal{C}_{0001}$ and $\mathcal{C}_{0011}$ admits a majority polymorphism. It follows form \cite{hubie-sicomp} that QCSP$(\mathcal{C})$ reduces to the verification of a polynomial number of instances of CSP$(\mathcal{C}^c)$, each of which is in NL by \cite{DalmauK08}. Finally, the case $\mathcal{C}_{0111}$ is taken care of in Proposition~\ref{prop:C0111}.

For hardness, it is well-known that \emph{quantified $3$-colouring} QCSP$(\mathcal{K}_3)$ is Pspace-complete \cite{Papa}. And the like result for $\mathcal{C}_{1111}$ appears as Proposition~\ref{prop:C1111}.
\end{proof}
\begin{figure}
\begin{center}
\includegraphics[scale=0.5]{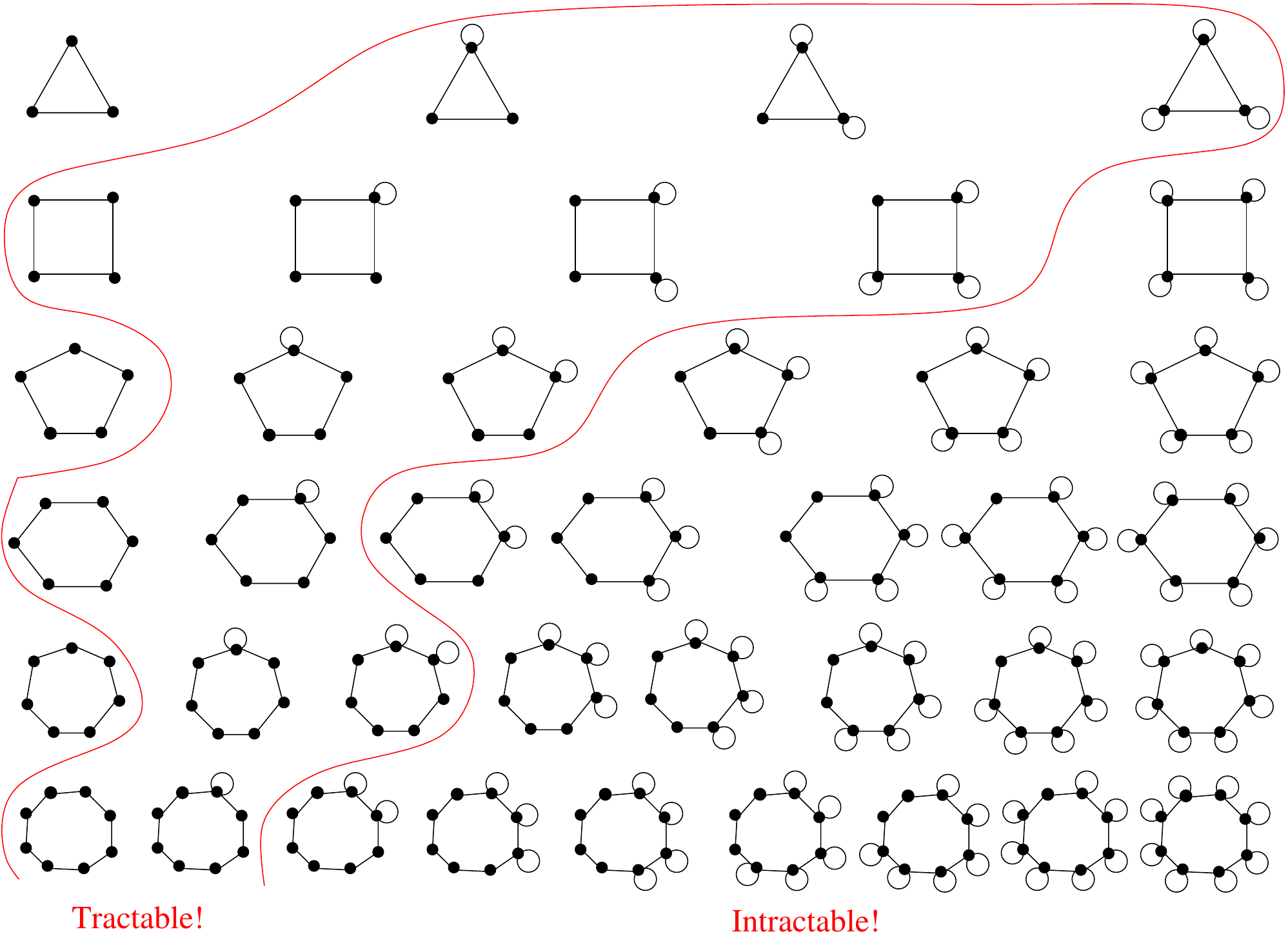}
\end{center}
\caption{The wavy line of tractability.}
\label{fig:wavyline}
\end{figure}

\section{Conclusion}
\label{sec:conclusion}
We have given a systematic classification for the QCSP of partially reflexive cycles. Many of the tractable cases can be explained by the notion of Q-core -- a minimal equivalent QCSP template (see \cite{CP2012}) -- and this is done implicitly in Theorem~\ref{thm:over4}. All NP-hard cases we have seen in this paper have templates that are already Q-cores, with the sole exception of $\mathcal{C}_{0101}$, whose Q-core is $\mathcal{P}_{101}$. By contrast, all of the tractable cases are not Q-cores, except  $\mathcal{C}_{0011}$ and  $\mathcal{C}_{0111}$.

Since finding an algorithm for QCSP$(\mathcal{C}_{0111})$ we became aware of a polymorphism enjoyed by this structure. $\mathcal{C}_{0111}$ has a ternary polymorphism $f$ so that there is $c \in C_{0111}$ such that each of the three binary functions $f(c,u,v)$,  $f(u,c,v)$ and  $f(u,v,c)$ is surjective. The code from Figure~\ref{fig:maroti} will verify such a polymorphism and is intended to run on the excellent program of Mikl\'os Mar\'oti\footnote{See: \texttt{http://www.math.u-szeged.hu{\raise.17ex\hbox{$\scriptstyle\sim$}}maroti/applets/GraphPoly.html}}. The naming of the vertices has been altered according to the bijection $\shefour{3}{0}{1}{4}$ (vertices are numbered from $0$ for the computer).
\begin{figure}
\begin{center}
\texttt{
\begin{tabular}{l}
 arity 3;
 size 4;
 idempotent;
 preserves 01 10 12 21 23 32 30 03 00 11 22; \\
 value 302=0;
 value 320=2;
 value 311=1;
 value 123=1;
 value 223=2;\\
 value 003=0;
 value 032=0;
 value 030=1;
 value 230=2
\end{tabular}
}
\end{center}
\caption{Code for Mar\'oti's program (semicolons indicate new line).}
\label{fig:maroti}
\end{figure}
If follows from \cite{hubie-sicomp} that  $\mathcal{C}_{0111}$ is $2$-collapsible, and hence QCSP$(\mathcal{C}_{0111})$ may be placed in NL by other means.

We would like to improve the lower bound from NP-hardness to Pspace-hardness in the cases of Propositions~\ref{prop:reflexive}, \ref{prop:odds}, \ref{prop:evens} and \ref{prop:evens2}. This might be quite messy in the last three cases, involving careful consideration of the hardness proof for the retraction problem. For the reflexive cycles, though, it just requires more careful analysis of the degenerate cases. This is because we may only have a homomorphic image under $f$ of the cycle for $v_1,\ldots,v_m$, but the universal variables may be evaluated outside of the image of  $f(\{v_1,\ldots,v_m\})$. We will require something of the following form ($\mathcal{E}_m$ is defined at the beginning of Section~\ref{sec:reflexive}).
\begin{conjecture}
\label{lem:gen}
Let $f$ be a function from $\{1,\ldots,m\}$ in $\mathcal{E}_m$ to $\mathcal{C}_{1^m}$ that is a non-surjective homomorphism. Let $f_{ij}$, for some $i,j \in \{1,\ldots,m\}$, be the partial function that extends $f$ from $\{1,\ldots,m,x,y\}$ in $\mathcal{E}_m$ to $\mathcal{C}_{1^m}$, by mapping $x \mapsto i$ and $y \mapsto j$. Then $f_{ij}$ can be extended to a homomorphism from $\mathcal{E}_m$ to $\mathcal{C}_{1^m}$.
\end{conjecture}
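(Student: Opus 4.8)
The plan is to exploit the fact that $\mathcal{E}_m$ and the target $\mathcal{C}_{1^m}$ are both \emph{reflexive} graphs, and to organise the extension around the winding behaviour of homomorphisms of reflexive cycles. Read cyclically, any homomorphism $h:\mathcal{C}_{1^m}\to\mathcal{C}_{1^m}$ traces a closed walk, and since every vertex is looped each edge step moves the image by $-1$, $0$ or $+1$ modulo $m$; the net displacement over one circuit is a multiple $\omega(h)\cdot m$ of $m$, the \emph{winding number} $\omega(h)$. The first thing I would record is that a \emph{non-surjective} homomorphism has winding number zero: if its image omits a vertex $w$, then the image lies in the reflexive path $\mathcal{C}_{1^m}\setminus\{w\}=\mathcal{P}_{1^{m-1}}$, which carries no winding. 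Thus the hypothesis that $f$ is non-surjective is, for our purposes, equivalent to $\omega(f)=0$.

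Next I would analyse how the connection gadget constrains successive copies. Writing $C^{(1)},\ldots,C^{(m)}$ for the chained copies (so $f$ lives on $C^{(m)}$, while $y$ together with the foot of the reflexive path to $x$ lives on $C^{(1)}$), and $h_t$ for the homomorphism induced on $C^{(t)}$, the edges joining vertex $k$ of $C^{(t)}$ to vertices $k$ and $k+1$ of $C^{(t+1)}$ force $h_{t+1}(k)\in N[h_t(k-1)]\cap N[h_t(k)]$ for every $k$. Hence $h_{t+1}$ is pointwise within distance $1$ of $h_t$ (up to the unit shift built into the gadget); in the language above, $h_{t+1}$ is homotopic to $h_t$ and so $\omega(h_{t+1})=\omega(h_t)$. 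Consequently, once $\omega(h_m)=\omega(f)=0$, every $h_t$ has winding number zero, and in particular the homomorphism $g:=h_1$ that we are free to install on $C^{(1)}$ must have winding number zero as well. So the extension problem becomes: choose a winding-zero $g$ on $C^{(1)}$ compatible with the prescribed values $y\mapsto j$ and $x\mapsto i$, and then interpolate between $g$ and $f$ by a sequence of $m$ elementary, pointwise-close homotopy steps realised along the chain.

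For the boundary data I would argue as follows. The vertex $y$ lies essentially antipodally to vertex $1$ in $C^{(1)}$ (at $\tfrac{m}{2}+1$ for even $m$, at $\tfrac{m+1}{2}+1$ for odd $m$), so $g$ must satisfy $g(y)=j$; the vertex $x$ sits at the end of a reflexive path of length $\tfrac{m}{2}-1$ hanging off vertex $1$ (for odd $m$, a path of length $\tfrac{m-3}{2}$ together with the extra triangle vertex), which, being reflexive, can walk from $g(1)$ to any vertex within distance $\tfrac{m}{2}-1$ of $g(1)$, stalling on loops as needed. The plan is to pick $g$ with $g(1)$ equal to (or, in the awkward antipodal case, a neighbour of) $i$ and $g(y)=j$, taking $g$ to map the two arcs of $C^{(1)}$ by folded geodesics so that it omits a vertex and is therefore winding-zero. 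One must check that this choice can be made for \emph{every} pair $(i,j)$, the only genuinely delicate instances being those where $i$ and $j$ are antipodal or where parity forces the walk out to the full radius $\tfrac{m}{2}$; here the one step of slack in the path length, and the freedom to nudge $g(1)$ off $i$, are exactly what save the construction, and this is the careful case analysis the paper alludes to for the reflexive cycles.

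The step I expect to be the principal obstacle is the interpolation itself: establishing that the two winding-zero maps $f$ and $g$ can always be joined by a chain of at most $m$ pointwise-close steps that respect the precise $k\mapsto k,k+1$ connection. Winding-zero maps into a reflexive cycle are null-homotopic and live, via the universal cover, in a contractible regime, so in principle such a homotopy exists; the quantitative content — that $m$ copies suffice, and that each elementary step is realised by a genuine homomorphism of the connection gadget rather than a merely continuous deformation — is what needs to be nailed down. I would attack it constructively, choosing $g$ not arbitrarily but so as to be close to $f$ in the homotopy metric (this couples the choice of $g$ in the previous paragraph to the interpolation), and then writing down $h_{m-1},\ldots,h_1$ explicitly as a fold that migrates by one vertex per copy; verifying that $m$ steps always close the gap, uniformly in $(i,j)$ and in the shape of the non-surjective $f$, is the crux on which a full proof of the conjecture rests.
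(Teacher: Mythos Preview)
The paper does not prove this statement: it is explicitly labelled a \emph{conjecture}, and the only accompanying commentary is that ``the proof of this seems to be rather technical'' together with the remark that the chain length in $\mathcal{E}_m$ may be taken to be any fixed function of $m$, so that the conjecture is ``surely easier to prove if we make the chains much longer (say exponential in $m$)''. There is therefore no paper's-own-proof to compare your proposal against.

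That said, your winding-number framework is a reasonable organising principle and is consistent with the paper's hints. Your observation that non-surjectivity of $f$ forces winding number zero, and that the between-copy connection edges force consecutive $h_t$ to have equal winding number, is correct and gives a clean conceptual explanation of why the degenerate (non-surjective) cases ought to extend for arbitrary $i,j$. You also correctly locate the real obstacle: the quantitative claim that $m$ copies always suffice to interpolate between the chosen $g$ on $C^{(1)}$ and the given $f$ on $C^{(m)}$, uniformly over all $i,j$ and all non-surjective $f$. You acknowledge this yourself in the final paragraph, and indeed this is precisely the step for which the paper offers no argument and instead suggests lengthening the chain as a workaround. So your proposal should be read as a plausible strategy towards an open conjecture rather than an incomplete reconstruction of a known proof; the gap you identify is real, and the paper leaves it open too.
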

\noindent The proof of this seems to be rather technical. For those who doubt it, let us remind ourselves that the length of the chain in $\mathcal{E}_m$ may be any fixed function of $m$, and the reduction of Propositions~\ref{prop:C1111} and \ref{prop:reflexive} will still work. The conjecture is surely easier to prove if we make the chains much longer (say exponential in $m$).

Finally, we conjecture that none of the NL cases are NL-hard, and that most likely our dichotomy can be perfected to L/ Pspace-complete.

\noindent \textbf{Acknowledgements}. The authors thank the referees and are grateful to \mbox{St.} Catherine.

\bibliographystyle{acm}

\pagebreak

\section*{Appendix}

\subsection{Proposition~\ref{prop:reflexive} in full}

\noindent \textbf{Proposition~\ref{prop:reflexive}.}
QCSP$(\mathcal{C}_{1^m})$, for any $m \geq 4$ is NP-hard.
\begin{proof}
We will reduce from the problem CSP$(\mathcal{K}_m)$ in essentially the same manner, for $m>4$, as we did for $m:=4$ and QCSP$(\mathcal{K}_m)$ in Proposition~\ref{prop:C1111}. The salient points remain unchanged, except that we use the gadgets $\mathcal{E}_m$ for edges in the graph input of QCSP$(\mathcal{K}_m)$, instead of $\mathcal{E}_4$, and we may assume that all variables in our reduction instance are existential. So it is that we build a formula $\Psi'(v_1,\ldots,v_m)$ in place of the formula $\Psi'(v_1,\ldots,v_4)$ that we built in Proposition~\ref{prop:C1111}. For $m > 4$, let us define the shorthands (the variables $x'$ and $x_1,\ldots, x_{\frac{m}{2}-2}$ are new and must not appear in $\phi$).
\[
\diamondsuit x \ \phi(x) := \forall x' \exists x_1,\ldots,x_{\lceil \frac{m}{2}\rceil -2} \exists x \ E(x',x_1) \wedge \ldots \wedge E(x_{\lceil \frac{m}{2} \rceil-2},x) \wedge \phi(x)
\]
Recalling that $\mathcal{C}_{1^m}$ is reflexive, it is not hard to verify that $\diamondsuit x$ may be satisfied on $\mathcal{C}_{1^m}$ iff $x$ takes at least $2$ values ($m$ even) or iff $x$ takes at least two non-adjacent values ($m$ odd). To elucidate, let us consider the two examples, when $m:=6,5$.
\[
\begin{array}{ll}
\diamondsuit x \ \phi(x) := \forall x' \exists x_1 \exists x \ E(x',x_1) \wedge E(x_1,x) \wedge \phi(x) & \mbox{if $m:=6$} \\ 
\diamondsuit x \ \phi(x) := \forall x' \exists x \ E(x',x) \wedge \phi(x) & \mbox{if $m:=5$} \\
\end{array}
\]
For $m$ even, we set $\Psi:=$
\[
\begin{array}{ll}
\exists v_1 \forall v_{\frac{m}{2}+1} \exists v_2,\ldots,v_{\frac{m}{2}} \ & [v_1 \Rightarrow v_{\frac{m}{2}+1}] \wedge \diamondsuit v_{\frac{m}{2}+2} \exists v_{\frac{m}{2}+3},\ldots,v_{m} [v_{\frac{m}{2}+1},v_{\frac{m}{2}+2} \Rightarrow v_1] \wedge \\
& \Psi'(v_1,\ldots,v_m)
\end{array}
\]
\noindent For $m$ odd, note that $\exists x \diamondsuit y \ E(x,y)$ selects adjacent pairs (but $x$ and $y$ distinct!). For $m$ odd, we set $\Psi:=$
\[
\begin{array}{ll}
\exists v_1\diamondsuit v_2 \ E(v_1,v_2) \wedge \forall v_{\frac{m+3}{2}} \exists v_3,\ldots,v_{\frac{m+1}{2}},v_{\frac{m+5}{2}}, \ldots, v_m \ & [v_2 \Rightarrow v_{\frac{m+3}{2}}] \wedge [v_{\frac{m+3}{2}} \Rightarrow v_1]  \wedge \\
& \Psi'(v_1,\ldots,v_m) 
\end{array}
\]
It is not hard to see that $\Psi$ forces on some evaluation of $v_1,\ldots,v_m$ that these map isomorphically to $1,\ldots,m$ in $\mathcal{C}_{1^m}$. If they do not, then they instead map according to a homomorphism $f$, and we are in a degenerate case. If they had mapped according to the identity, then a yes-instance (all other variables are existential) extends to homomorphism (from the graph associated with $\Psi$) by, say, $g$. But now $f \circ g$ extends to homomorphism in the degenerate case, and we are done. 
\end{proof}

\subsection{Conclusion of proof of Proposition~\ref{prop:P101}}

For each existential variable $v_1$ in $\Phi$ we add the gadget on the far left, and for each universal variable $v_2$ we add the gadget immediately to its right. There is a single vertex in that gadget that will eventually give rise to a variable in $\Psi$ that is universally quantified, and it is labelled $\forall$ (however, it is the vertex at the other end, labelled $v_2$ in Figure~\ref{fig:gadgets}, that actually corresponds to the universally quantified variable) . For each clause of $\Phi$ we introduce a copy of the clause gadget drawn on the right. We then introduce an edge between a variable $v$ and literal $l_i$ ($i \in \{1,2,3\}$) if $v=l_i$ (note that all literals in QNAE3SAT are positive). We reorder the literals in each clause, if necessary, to ensure that literal $l_2$ of any clause is never a variable in $\Phi$ that is universally quantified. It is not hard to verify that homomorphisms from $\mathcal{G}_\Phi$ to $\mathcal{P}_{101}$ (such that the paths $\top$ and $\bot$ are evaluated to $1$ and $3$, respectively) correspond exactly to satisfying not-all-equal assignments of $\Phi$. The looped vertices must map to either $1$ or $3$ -- $\top$ or $\bot$ -- and the clause gadgets forbid exactly the all-equal assignments. Now we will consider the graph $\mathcal{G}_\Phi$ realised as a formula $\Psi''$, in which we will existentially quantify innermost all of the variables of $\Psi''$ except: 
\begin{itemize}
\item one variable each, $v_\top$ and $v_\bot$, corresponding respectively to some vertex from the paths $\top$ and $\bot$, and
\item all variables corresponding to the centre vertex of an existential variable gadget, and
\item all variables corresponding to the $\forall$-selector (the centre vertex of a universal variable gadget and the vertex labelled $\forall$). 
\end{itemize}
We now build $\Psi'$ by quantifying, adding outermost and in the order of the quantifiers of $\Phi$:
\begin{itemize}
\item existentially, the variable corresponding to the centre vertex of an existential variable gadget,
\item universally, the variable corresponding to the extra vertex labelled $\forall$ of a universal variable gadget, and then existentially, the remaining vertices of the $\forall$-selector.
\end{itemize}
\noindent The reason we do not directly universally quantify the vertex associated with a universal variable is because we want it to be forced to range over only the looped vertices $1$ and $3$ (which it does as its unlooped neighbour $\forall$ is forced to range over all $\{1,2,3\}$). $\Psi'(v_\top,v_\bot)$ is therefore a positive Horn formula with two free variables, $v_\top$ and $v_\bot$, such that, $\Phi$ is QNAE3SAT iff $\mathcal{P}_{101} \models \Psi'(1,3)$. 
Finally, we construct $\Psi$ from $\Psi'$ with the help of two $\forall$-selectors, adding new variables $v'_\top$ and $v'_\bot$, and setting  
\[ \Psi:=\forall v'_\top, v'_\bot \exists v_\top, v_\bot \ E(v'_\top,v_\top) \wedge E(v'_\bot,v_\bot) \wedge \Psi'(v_\top,v_\bot).\] 
The purpose of universally quantifying the new variables $v'_\top$ and $v'_\bot$, instead of directly quantifying $v_\top$ and $v_\bot$, is to force $v'_\top$ and $v'_\bot$ to range over $\{1,3\}$ (recall that $E(v_\top,v_\top)$ and $E(v_\bot,v_\bot)$ are both atoms of $\Psi$). This is the same reason we add the vertex $\forall$ to the universal variable gadget.

We claim that $\mathcal{P}_{101} \models \Psi'(1,3)$ iff $\mathcal{P}_{101} \models \Psi$. It suffices to prove that $\mathcal{P}_{101} \models \Psi'(1,3)$ implies $\mathcal{P}_{101} \models \Psi'(3,1), \Psi'(1,1), \Psi'(3,3)$. The first of these follows by symmetry. The second two are easy to verify, and follow because the second literal in any clause is forbidden to be universally quantified in $\Phi$. If both paths $\top$ and $\bot$ are \mbox{w.l.o.g.} evaluated to $1$, then, even if some $l_1$- or $l_3$-literals are forced to evaluate to $3$, we can still extend this to a homomorphism from $\mathcal{G}_\Phi$ to $\mathcal{P}_{101}$.

\subsection{Latter part of Section~\ref{sec:path} in full}

The following proofs make use of reductions from Ret$(\mathcal{C})$, where $|C|=m$. It is ultimately intended that the variables $v_1,\ldots,v_m$ in the created instance map automorphically to the cycle. The cycle will be found when the universally quantified $v_1$ is mapped to a non-looped vertex at maximal distance from the looped vertices (sometimes this is unique, other times there are two). We then require that the universally quantified $x_1$ be mapped to a neighbour of $v_1$ at maximal distance from the loops (given $v_1$'s evaluation, this will either be unique or there will be two). All other maps of $v_1$ and $x_1$ lead to degenerate cases.

\

\noindent \textbf{Proposition~\ref{prop:odds}.}
Let $\mathcal{C}$ be an odd $m$-cycle which contains an induced $\mathcal{P}_{11100}$ (or is $\mathcal{C}_{0^21^3}$). Then QCSP$(\mathcal{C})$ is NP-hard.
\begin{proof}
We reduce from the problem of Ret$(\mathcal{C})$, known to be NP-complete from \cite{pseudoforests}. Let $\mathcal{C}$ be of the form $\mathcal{C}_{0^d1^e}$. We find a copy of $\mathcal{C}$ with the variables $v_1,\ldots,v_m$ in the following fashion.
\[
\begin{array}{l}
\forall v_1 \exists v_2, \ldots,v_{\frac{m-1}{2}} \\
 \forall x_1 \exists x_2,\ldots,x_{\frac{m+1}{2}} \exists v_{\frac{m+1}{2}}  [v_1 \Rightarrow v_{\frac{m+1}{2}}] \wedge [x_1 \Rightarrow x_{\frac{m+1}{2}}] \wedge \mathrm{Ref}(v_{\lceil \frac{d+2}{2} \rceil},\ldots,v_{\frac{m+1}{2}})  \wedge \\
\mathrm{Ref}(x_{\lceil \frac{d+2}{2} \rceil},\ldots,x_{\frac{m+1}{2}}) \wedge x_{\frac{m+1}{2}}=v_{\frac{m+1}{2}} \wedge \\
\exists v_m,\ldots,v_{\frac{m+3}{2}} \ x_{\frac{m-1}{2}}=v_{\frac{m+3}{2}} \wedge [v_1,v_m \Rightarrow v_{\frac{m+1}{2}}] \wedge  \mathrm{Ref}(v_{\frac{m+3}{2}},\ldots,v_{\lceil \frac{m+2}{2} \rceil}) \wedge \ldots\\
\end{array}
\]
\end{proof}
\noindent It is interesting to see where the previous proof breaks down cycles in which there are less than three loops or a single non-loop. Let us consider the sentence obtained in the case for $\mathcal{C}_{10^4}$.
\[
\begin{array}{ll}
\forall v_1 \exists v_2, v_3 \forall x_1 \exists x_2,x_3 & [v_1 \Rightarrow v_{3}] \wedge [x_1 \Rightarrow x_{3}] \wedge E(v_{3},v_{3}) \wedge x_{3}=v_{3} \wedge \\
& \exists v_5,v_4 \ x_{2}=v_{4} \wedge [v_1,v_5 \Rightarrow v_3] \wedge E(v_{3},v_{3}) \wedge \ldots\\
\end{array}
\]
The problem with $\mathcal{C}_{10^4}$ occurs when $v_1$ and $x_1$ are both evaluated as, say, vertex $3$ (there can be no $v_5$ between $v_1$ and $x_2$).

\

\noindent \textbf{Proposition~\ref{prop:evens}.}
Let $\mathcal{C}$ be an even $m$-cycle which contains an induced $\mathcal{P}_{11100}$. Then QCSP$(\mathcal{C})$ is NP-hard.
\begin{proof}
We reduce from the problem Ret$(\mathcal{C})$, known to be NP-complete from \cite{pseudoforests}. Let $\mathcal{C}$ be of the form $\mathcal{C}_{0^d1^e}$. We find a copy of $\mathcal{C}$ with the variables $v_1,\ldots,v_m$ in the following fashion.
\[
\begin{array}{l}
\forall v_1 \exists v_2, \ldots,v_{\lceil \frac{m-d+2}{2}\rceil} \\
 \forall x_1 \exists x_2,\ldots,x_{\lceil \frac{m-d+2}{2}\rceil}  [v_1 \Rightarrow v_{\lceil \frac{m-d+2}{2}\rceil}] \wedge [x_1 \Rightarrow x_{\lceil \frac{m-d+2}{2}\rceil}] \wedge \mathrm{Ref}( v_{\lceil \frac{m-d+2}{2}\rceil}, x_{\lceil \frac{m-d+2}{2}\rceil}) \wedge \\
 \exists v_{\lceil \frac{m-d+2}{2}\rceil+1}, \ldots,v_{\lceil \frac{m-d+2}{2}\rceil+d-1} \\
 v_{\lceil \frac{m-d+2}{2}\rceil+d-1}=x_{\lceil \frac{m-d+2}{2}\rceil} \wedge  [v_{\lceil \frac{m-d+2}{2}\rceil+1} \Rightarrow v_{\lceil \frac{m-d+2}{2}\rceil+d-1}] \wedge \mathrm{Ref}(v_{\lceil \frac{m-d+2}{2}\rceil+1}, v_{\lceil \frac{m-d+2}{2}\rceil+d-1})  \wedge \\
  \exists v_{\lceil \frac{m-d+2}{2}\rceil+d}, \ldots,v_{m}   [v_{\lceil \frac{m-d+2}{2}\rceil+d-1} \Rightarrow v_{m}]
\end{array}
\]
\end{proof}
\noindent We note that the previous two propositions do not quite use the same techniques as one another. All cases of Proposition~\ref{prop:missing} involving more than one non-loop are weakly subsumed by Propositions~\ref{prop:odds} and \ref{prop:evens} in the sense that Pspace-completeness only becomes NP-hardness.

It is interesting to see where the Proposition~\ref{prop:evens} breaks down on even cycles with two consecutive loops only. It is no longer possible to ensure to encircle the cycle. For these cases we will need yet another specialised construction.

\

\noindent \textbf{Proposition~\ref{prop:evens2}.}
For $m \geq 6$, let $\mathcal{C}$ be an even $m$-cycle which contains only two consecutive loops. Then QCSP$(\mathcal{C})$ is NP-hard.
\begin{proof}
Let us label the vertices of the cycle $1$ to $m$ s.t. the loops are at positions $\frac{m}{2}$ and  $\frac{m}{2}+1$. We introduce the following shorthand ($u'$ does not appear in $\phi$).
\[
\begin{array}{c}
\heartsuit u \ \phi(u) := \forall u' \exists u \ E(u',u) \wedge \phi(u) \\
\end{array}
\]
The $\heartsuit$ quantifier is some kind of weak universal quantifier, with the following important property: if $\phi(2)$ and $\phi(m-1)$ are false then $\phi(1)$ and $\phi(m)$ are true.

Unlike the previous proofs, which used just $v_1,\ldots,v_m$, we will need also $w_1,\ldots,w_n$ and we will be able to say that \emph{at least one of them} maps automorphically to $\mathcal{C}$. We begin with
\[
\begin{array}{l}
\forall v_1 \exists v_2, \ldots,v_{\frac{m}{2}}  [v_1 \Rightarrow v_{\frac{m}{2}}] \wedge E(v_{\frac{m}{2}},v_{\frac{m}{2}}) \wedge \\ 
\forall w_1 \exists w_2, \ldots,w_{\frac{m}{2}}  [w_1 \Rightarrow w_{\frac{m}{2}}] \wedge  E(w_{\frac{m}{2}},w_{\frac{m}{2}}) \wedge
\end{array}
\]
\noindent If $v_1$ and $w_1$ are evaluated on $1$ and $m$, respectively, then $v_{\frac{m}{2}}$ and $w_{\frac{m}{2}}$ will be on $\frac{m}{2}$ and  $\frac{m}{2}+1$, respectively. In the past, we did not care about degenerate cases, so long as they extend to homomorphism, but here we will use the fact that some degenerate cases \emph{could not} extend to homomorphism, if we had used a universal quantifier. Assuming $v_1,w_1$ is mapped to $1,m$, these degenerate cases would come about with $z$ mapped to either $2$ or $m-1$. Thus, using the $\heartsuit$ quantifier we can force $z$ to be mapped to both $1$ and $m-1$. We complete with
\[
\begin{array}{lll}
\heartsuit z & \exists x_2, \ldots,x_{\frac{m}{2}} \exists v_{\frac{m}{2}+1},v_{\frac{m}{2}+2} & E(v_{\frac{m}{2}+1},v_{\frac{m}{2}+1}) \wedge E(v_{\frac{m}{2}+1},v_{\frac{m}{2}+2}) \wedge \\
& & [z,x_2\Rightarrow x_{\frac{m}{2}},v_{\frac{m}{2}+1},v_{\frac{m}{2}+2}] \wedge v_{\frac{m}{2}+1}=x_{\frac{m}{2}} \wedge v_{\frac{m}{2}+2}=x_{\frac{m}{2}-1} \wedge \\
& \exists y_2, \ldots,y_{\frac{m}{2}} \exists w_{\frac{m}{2}+1},w_{\frac{m}{2}+2} & E(w_{\frac{m}{2}+1},w_{\frac{m}{2}+1}) \wedge E(w_{\frac{m}{2}+1},w_{\frac{m}{2}+2}) \wedge \\ 
& & [z,y_2\Rightarrow y_{\frac{m}{2}},w_{\frac{m}{2}+1},w_{\frac{m}{2}+2}] \wedge w_{\frac{m}{2}+1}=y_{\frac{m}{2}}  \wedge w_{\frac{m}{2}+2}=y_{\frac{m}{2}-1} \wedge \\
& & \exists  v_{\frac{m}{2}+3}, \ldots,v_m [v_{\frac{m}{2}+2}\Rightarrow v_m,v_1] \wedge \\
& & \exists  w_{\frac{m}{2}+3}, \ldots,w_m [w_{\frac{m}{2}+2}\Rightarrow w_m,w_1].\\
\end{array}
\]
\noindent The key observation is that the statements $v_{\frac{m}{2}+2}=x_{\frac{m}{2}-1}$ and  $w_{\frac{m}{2}+2}=y_{\frac{m}{2}-1}$ are impossible to satisfy if $v_1,w_1,z$ is mapped to $1,m,2$ or $1,m,m-1$. This is why a universal quantifier can not be used for $z$ and why we can not add $v_{\frac{m}{2}+2}=x_{\frac{m}{2}-1}$ to the construction from the previous proposition.
\end{proof}

\subsection{Section~\ref{sec:disconnected} in full}

Let $D_\mathcal{C}:=\{\{\lceil \frac{d_1}{2} \rceil ,\ldots,\lceil \frac{d_m}{2} \rceil\}\}$ be the multiset (of two or more elements), where $d_1,\ldots,d_m$ are the maximal non-looped induced sections (paths) of a cycle $\mathcal{C}$ in which the loops induce a disconnected graph. \mbox{E.g.} a single non-loop between two loops contributes a value $\lceil 1/2 \rceil = 1$ to $D_\mathcal{C}$. We need to split into three cases. In the following, as in \cite{QCSPforests}, we refer to paths involving loops and non-loops according to words $\{0,1\}^*$ -- where $0$ indicates a non-loop and $1$ a loop. 

\

\noindent \textbf{Proposition~\ref{prop:discon-unique-max}}
Let $\mathcal{C}$ be a partially reflexive $m$-cycle in which the loops induce a disconnected graph. If $D_\mathcal{C}$ contains a unique maximal element $\lceil \frac{d}{2} \rceil$, then QCSP$(\mathcal{C})$ is Pspace-complete.
\begin{proof}
Let $\mathcal{P}$ be the partially reflexive path, obtained by removing the middle (or middle two) vertices in the unique section of non-loops of maximal length in $\mathcal{C}$. It follows from inspection of the proofs from \cite{QCSPforests} that QCSP$(\mathcal{P})$ is Pspace-complete even when the universal variables are restricted to range over some subset of $P$ that contains at least each of the two end-most loops. We reduce from this problem. The $\forall$-selector will be $\mathcal{P}_{0^{\lceil \frac{d}{2} \rceil} 1}$. It remains to show how to relativise the existential variables so they range over precisely $P \subseteq C$. In order to do this, we ensure that each existential variable is at distance $\lceil \frac{d}{2} \rceil -1$ from its nearest loop, by adding a path of length  $\lceil \frac{d}{2} \rceil -1$ to a loop. The reduction is clear and the result follows.
\end{proof}

\

\noindent \textbf{Proposition~\ref{prop:discon-even-gaps}}
Let $\mathcal{C}$ be a partially reflexive $m$-cycle in which the loops induce a disconnected graph. If $D_\mathcal{C}$ contains only one value, then QCSP$(\mathcal{C})$ is Pspace-complete.
\begin{proof}
Let $\mathcal{C}$ be over $n$ vertices, and suppose $D_\mathcal{C}$ contains $k$ instances of its maximal element $f$. These may be produced by sections of non-loops of length $2f$ or $2f-1$. Let $g^+$ be the maximal such length of non-loops. The reduction is similar to that employed in Proposition~\ref{prop:P101}. The pattern will be $\mathcal{P}_{10^{g^+}1}$, while the $\forall$-selector will be $\mathcal{P}_{0^{f} 1}$. It remains to explain how to evaluate $v_\top$ and $v_\bot$. We begin with a universal variable followed by a path of $f$ non-loops and then a loop $(0^f1)$-- just as in the $\forall$-selector. We then have a sequence of alternations of $1^n 0^{2f}$ followed finally by $1^n$. The number of these alternations is $\lceil (\lfloor \frac{k}{2} \rfloor -1)/2 \rceil$ from $v_\top$ and $\lfloor (\lfloor \frac{k}{2} \rfloor -1)/2 \rfloor$ from $v_\bot$. The point is that when the initial universal variables are evaluated at opposite sides of $\mathcal{C}$, with respect to the $k$ large gaps, then some evaluation forces $v_\top$ and $v_\bot$ to be at distance exactly distance $g^+$, across non-loops.
\end{proof}

\

\noindent \textbf{Corollary~\ref{cor:disconnected}}
Let $\mathcal{C}$ be a partially reflexive $m$-cycle in which the loops induce a disconnected graph. Then QCSP$(\mathcal{C})$ is Pspace-complete.
\begin{proof}
Taking into consideration the results of Propositions~\ref{prop:discon-unique-max} and \ref{prop:discon-even-gaps}, the only remaining case is where $D_\mathcal{C}$ has multiple -- $k$ -- copies of its maximal element, and some other smaller elements also. But, this case can be handled very similarly to Proposition~\ref{prop:discon-even-gaps}. Simply amend the alternating motif  $1^n 0^{2f}$ to $1^n (0^{2f-2} 1^n)^n 0^{2f}$ -- thereby allowing any section of $2f-2$ or fewer non-loops to be crossed for free. 
\end{proof}

\end{document}